\documentclass[aps,prb,twocolumn,showpacs,superscriptaddress]{revtex4-1}
\usepackage{graphicx}
\usepackage{dcolumn}
\usepackage{epstopdf}
\usepackage{bm}
\usepackage{color}
\usepackage[caption=false]{subfig}

\usepackage{amsfonts,amsthm,amsmath,amssymb}
\usepackage{hyperref}

\newcommand{\kn}{k_U}
\newcommand{\lmdn}{\lambda_U}
\newcommand{\En}{E_U}

\newcommand{\widebar}[1]{\overline{#1}}

\newcommand\numberthis{\refstepcounter{equation}\tag{\theequation}}

\newcommand{\todo}[1]{}

\newcommand{\sO}{\Psi_R}
\newcommand{\sB}{\widetilde{\Psi}_R}
\newcommand{\sP}{\widetilde{\Psi}_+}
\newcommand{\sM}{\widetilde{\Psi}_-}
\newcommand{\wpsi}{\widetilde{\psi}}
\newcommand{\psiR}{\psi_R}
\newcommand{\psiL}{\psi_L}

\newcommand{\tA}{\widetilde{A}}

\newcommand{\phiR}{\phi_R}
\newcommand{\phiL}{\phi_L}

\newcommand{\tr}{\mathop{\mathrm{tr}}}
\newcommand{\supp}{\mathop{\mathrm{supp}}}

\newcommand{\paperj}[1]{}

\newtheorem{lemma}{Lemma}
\newtheorem{proposition}{Proposition}

\theoremstyle{definition}

\begin{document}
\title{Chiral tunneling through generic one-dimensional potential barriers in bilayer graphene}

\author{V. Kleptsyn}
\email[]{victor.kleptsyn@univ-rennes1.fr}
\affiliation{CNRS, University of Rennes 1, Institute of Mathematical Research of Rennes, Rennes, France}
\author{A. Okunev}
\author{I. Schurov}
\affiliation{National Research University Higher School of Economics, Myasnitskaya 20, 101000 Moscow, Russia}
\author{D. Zubov}
\affiliation{Dept. of Differential Equations, Moscow State University, GSP-2, Leninskiye Gory, 119992 Moscow, Russia}
\author{M.~I. Katsnelson}
\affiliation{Radboud University, Institute for Molecules and Materials, Heijendaalseweg 135, 6525 AJ Nijmegen, The Netherlands}
\affiliation{Department of Theoretical Physics and Applied Mathematics, Ural Federal University, Mira str. 19, Ekaterinburg, 620002, Russia}
\date{\today}

\begin{abstract}
We study tunneling of charge carriers in single- and bilayer graphene. We propose an explanation for non-zero ``magic angles'' with 100\% transmission for the case of symmetric potential barrier, as well as for their almost-survival for slightly asymmetric barrier in the bilayer graphene known previously from numerical simulations. Most importantly, we demonstrate that these magic angles are {\it not} protected in the case of bilayer and give an explicit example of a barrier with very small electron transmission probability for {\it any} angles. This means that one can lock charge carriers by a p-n-p (or n-p-n) junction {\it without} opening energy gap. This creates new opportunities for the construction of graphene transistors.
\end{abstract}

\pacs{72.80.Vp, 03.65.Pm, 02.30.Hq}
\maketitle

\sloppy


\section{Introduction}
Klein tunneling, that is, transmission of massless Dirac fermions with a high probability through potential barriers, whatever broad and high they are \cite{Katsbook,Cetal09,KNG06,SRL08,SHG09,YK09,TRK12} is one of the key phenomena for graphene physics and technology. It protects conducting state of graphene with a high charge carrier mobility despite charge inhomogeneities \cite{Metal07}; at the same time, it does not allow to use the simplest construction of graphene transistor based on p-n-p (or \mbox{n-p-n}) junctions since such a device can never be locked \cite{KNG06}. As a result, some tricky ways should be used, for example, tunneling transistor with vertical geometry \cite{Betal12} where electrons tunnel between two sheets of graphene.

Full transmission for normally incident electron beam is symmetrically protected: since for massless Dirac electrons the propagation direction is intimately related to the direction of pseudospin, and the latter cannot be changed by an action of the electrostatic potential smooth at interatomic distances, the back scattering is completely forbidden \cite{Katsbook}; before the discovery of graphene, this was noticed as an explanation of stability of conducting channels in carbon nanotubes \cite{ANS98}. In the first calculation for the rectangular potential barrier in two-dimensional case \cite{KNG06}, additional nonzero ``magic angles'' of the incidence were found with also full transmission; these magic angles were also found for the parabolic barrier \cite{SRL08} and associated to Fabry-P\'erot resonances known in optics. These additional resonances are {\it not} universally protected. Semiclassical analysis \cite{TRK12,RTK13} has shown that these magic angles exist for symmetric potential barriers only whereas for generic one-dimensional barriers maxima of transmission corresponding to the Fabry-P\'erot conditions are suppressed; moreover, this suppression is exponential in a formal semiclassical smallness parameter.

Much less is known on the case of bilayer graphene where, in the simplest approximation, electron spectrum is massless but with parabolic touching instead of conical one and with nontrivial chiral properties of the charge carrier wave function \cite{Katsbook,Cetal09,Netal06,MF06}. In this case, for the normally incident electron beam the transmission probability is exponentially small \cite{KNG06}; this is a nice counterexample to attempts to relate Klein tunneling in single layer graphene ``just'' with the gapless character of the energy spectrum whereas the chiral properties of the wave functions are the most important. Existence of ``magic angles'' with full transmission in bilayer graphene has been demonstrated numerically for rectangular potential barrier \cite{KNG06} and for some smoothen shapes of the barrier~\cite{TRK12}. In the last paper, it was claimed that for asymmetric potentials, contrary to the case of single layer, magic angles survive. As we will see this statement is not quite accurate. Chiral effects in penetration of charge carriers through potential barriers in bilayer graphene has been studied also in Refs.~\onlinecite{GRL11} and~\onlinecite{DP13}. However, the issue of stability of magic angles in the bilayer graphene is still unclear. They have no obvious symmetry protection, like 100\% transmission at zero incident angle for the single layer graphene. Apart from theoretical interest this is a question of potentially great practical importance: if it would be possible to create a potential barrier with small enough electron transmission probability for any angles this would open a way to build a conventional transistor based on p-n-p (n-p-n)  junction in bilayer graphene without gap opening. In this paper we will show that this is, indeed, theoretically speaking, possible: the magic angles are not stable and electron transmission can be strongly suppressed by a proper choice of the shape of the barrier.

\section{Basic equations and the formulation of the problem}

\subsection{Single-layer graphene}\label{ssec:single-layer}
Quantum mechanics of charge carriers in graphene is governed by a massless Dirac equation \cite{Katsbook,Cetal09}:
\begin{equation}\label{eq:D}
\hat{H} \Psi = E \Psi,
\end{equation}
\begin{align*}
\hat{H}
& \, = V_F (\sigma_x \hat{p}_x +\sigma_y \hat{p}_y) + U(x,y) \\
& \,=  \left(
\begin{matrix}
U(x,y) & V_F  (\hat{p}_x - i \hat{p}_y) \\
V_F (\hat{p}_x + i \hat{p}_y) & U(x,y)
\end{matrix}
\right),\numberthis \label{eq:H-single}
\end{align*}
where $E$ is an energy of the stationary state, $\sigma_x$, $\sigma_y$ are Pauli matrices, $U(x,y)$ is the potential, $V_F\approx c/300$ is the Fermi velocity ($c$ is the velocity of light), $\hat{p}_x=-i\hbar \partial_x$, $\hat{p}_y=-i\hbar \partial_y$.

To study Klein tunneling, we restrict ourselves to the conventional case of a one-dimensional (i.e. given by a function depending on one variable $U(x,y)=U(x)$) potential barrier. A natural step is then to study the transmission and reflection probabilities as a function of the angle of incidence $\theta$. As mentioned in the Introduction, the transmission probability at $\theta=0$ is always 100\%, irrespective to the parameters of the potential. Numerical calculations show existence of 100\% transmission at additional \emph{magic angles} for symmetric potential barriers and just maxima at some angles for the asymmetric ones \cite{TRK12,RTK13}. To illustrate this, we show in Fig. \ref{fig:singlelayer} our computational results obtained for the same potentials of n-p-n junctions which were considered in Ref.\onlinecite{TRK12}, that coincide up  computer precision to those of~\cite[Fig. 6]{TRK12}. 

\begin{figure}[hbt]
    \begin{tabular}{ll}
        \subfloat[]{\includegraphics[width=4cm]{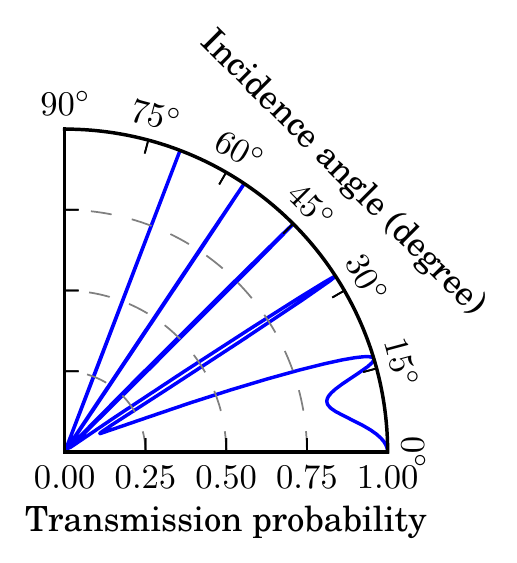}}&
        \subfloat[]{\includegraphics[width=4cm]{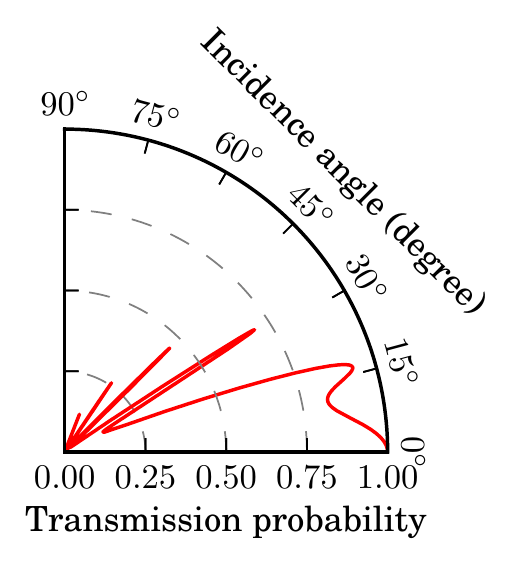}}\\
        \subfloat[]{\includegraphics[width=4cm]{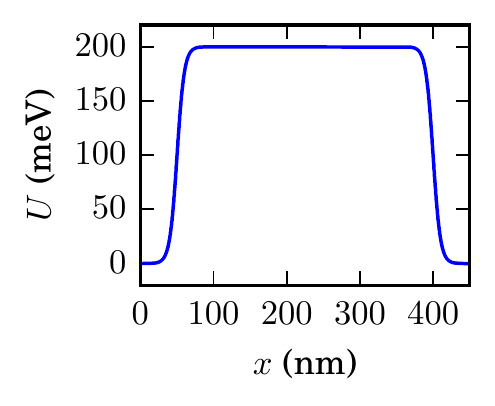}}&
        \subfloat[]{\includegraphics[width=4cm]{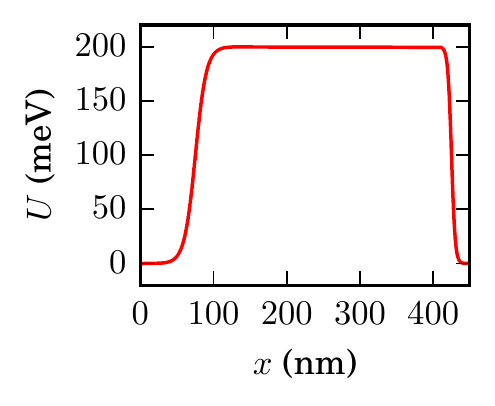}}
    \end{tabular}
    \caption{Numerical simulation of transmission probability for n-p-n junction in single layer graphene. 
       Energy of a particle 80 meV, height of the junction 200 meV. (a) Transmission probability for a 
       symmetric potential of barrier width $l_2=250$~nm, for which n-p and p-n regions have
       characteristic widths $l_1=l_3=100$~nm. (b) Transmission probability for an asymmetric potential
       with $l_1=150$~nm and $l_3=50$~nm. (c) and (d): the corresponding potentials.
        See~\cite[Fig.~6]{TRK12}.}\label{fig:singlelayer}
\end{figure}

From a general point of view, the existence of the magic angles for symmetric potentials and their disappearance for generic ones is an unexpected property of the Dirac equation (\ref{eq:H-single}). In Refs.\onlinecite{TRK12,RTK13} it was explained within a quite complicated semiclassical theory. It is interesting to study its origin \emph{per se}. Perfect transmission assumes that the reflection amplitude is zero; this means that a \emph{complex}-valued function of one \emph{real} argument, the angle of incidence, has nontrivial zeroes. We will give below a simple solution of this problem based only on symmetry properties of the Dirac Hamiltonian.

\subsection{Bilayer graphene}\label{s:intro-b}

The situation is different for the bilayer graphene. Following Ref.\onlinecite{KNG06}, we will deal with the simplest effective Hamiltonian describing chiral particles with parabolic dispersion:
\begin{equation}\label{eq:H-double}
\hat{H}= \left(
\begin{matrix}
U(x,y) &  (\hat{p}_x - i \hat{p}_y)^2/2m \\
 (\hat{p}_x + i \hat{p}_y)^2/2m & U(x,y)
\end{matrix}
\right),
\end{equation}
where $m\approx 0.031m_e$ is the effective mass of the electron in the bilayer graphene \cite{Metal11}, $m_e$ is the free electron mass. (This value is according to the latest experimental data; in early papers, the value $m = 0.054m_e$ was used.)

This Hamiltonian is not applied to the real bilayer graphene for very low energies ($E < 10$ meV) where the effects of trigonal warping are essential \cite{MF06,Metal11} and for high enough energies ($E > 200$ meV) where the transition to four-band picture \cite{DP13} is required. Anyway, Eq. (\ref{eq:H-double}) is a new type of wave equation different from both nonrelativistic Schr\"{o}dinger and Dirac equations and its study is by itself of significance for mathematical physics.

Contrary to the case of single-layer, there is no more Klein tunneling at zero angle of incidence and no clear symmetry properties which would protect full transmission at other angles, thus giving a hope to find a barrier that allows blocking of a current.

Numerical experiments for an n-p-n junction in bilayer~\cite{TRK12} have shown the
following:
\begin{itemize}
    \item nonzero magic angles are still present in a symmetric potential;
    \item moreover (and most surprizingly!) they seemed to survive when one passes
        to a non-symmetric potential, see~\cite[Fig. 7]{TRK12} and Fig.~\ref{fig:bilayer}(b) 
        with our computational results.
\end{itemize}
Both these effects thus were to be explained.

\begin{figure}[hbt]
    \begin{tabular}{cc}
        \subfloat[]{\includegraphics[width=4cm]{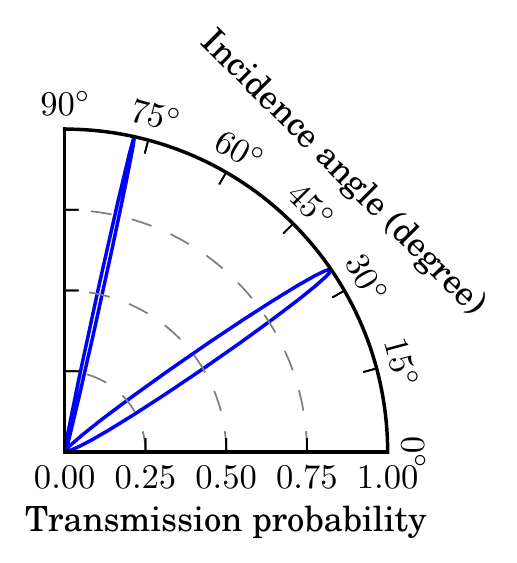}}&
        \subfloat[]{\includegraphics[width=4cm]{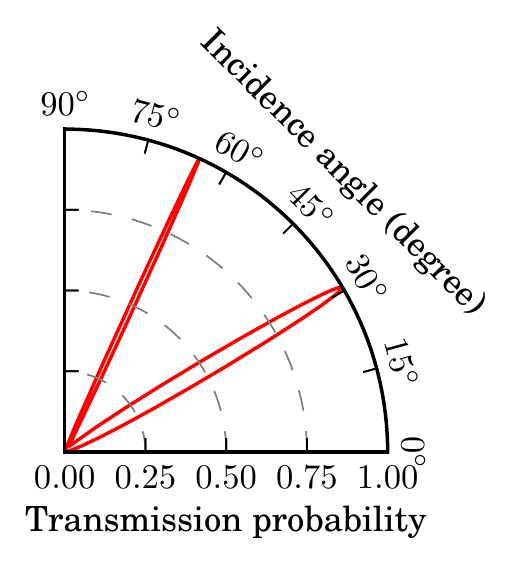}}\\
        \subfloat[]{\includegraphics[width=4cm]{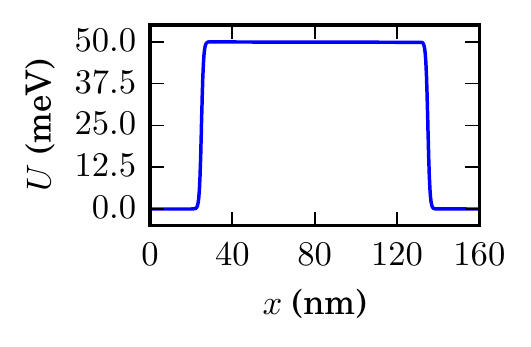}}&
        \subfloat[]{\includegraphics[width=4cm]{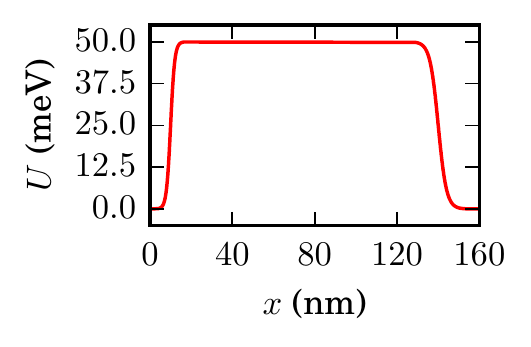}}
    \end{tabular}
\caption{Numerical simulations of a transmission through an n-p-n junction in bilayer graphene. Energy of a particle 17 meV, height of the junction 50 meV.
        (a) Transmission probability for a symmetric
        potential of barrier width $l_2=100$~nm, with n-p and p-n regions of
        characteristic width $l_1=l_3=10$~nm. (b) Transmission probability for an asymmetric potential
        with $l_1=20$~nm and $l_3=40$~nm. (c) and (d): the corresponding potentials.}
        \label{fig:bilayer}
\end{figure}

\subsection{The specific goals}
In this paper we are answering all the above questions as well as obtaining some other results. Namely, we
\begin{enumerate}
    \item Obtain an equivalent condition for a magic angle for a symmetric
        potential. This condition is given by one \emph{real} equation (with
        generically simple roots). This is done for both single-layer (see Sec.~\ref{s:single-layer})
        and bilayer (Sec.~\ref{s:bilayer}) graphene. In particular, the presence of magic
        angles for the symmetric n-p-n junctions (considered in Refs.~\onlinecite{TRK12,RTK13}) cannot be removed by a small
        perturbation of the potential within the class of symmetric ones.
    \item Explain (see Sec.~\ref{s:symmetric}) the seemingly magic angles for the nonsymmetric
        potentials (red line on Fig.~\ref{fig:bilayer}). It turns out that the minimal
        reflection probability is very small (of order $10^{-3}..10^{-4}$ for the potential
        studied in Ref.\cite{TRK12}) but is still nonzero (cf. Fig.~\ref{fig:zoomin}). The reflection/transmission probabilities
        for this potential turn out to be very close to the ones for a close
        symmetric potential (cf. Fig.~\ref{fig:sym-vs-asym}). The latter possesses exact magic angles, that
        become a seemingly magic for a non-symmetric one.
    \item Find (see Sec.~\ref{s:no-transmission}) an explicit example of a potential for which in some band of
        energies the transmission probability is less then $2\cdot 10^{-8}$ for any
        angle of incidence. Interestingly enough, such a potential
        can be taken to be symmetric, in particular, showing that magic angles
        are not obliged to be present for an arbitrary symmetric potential
        $U(x)$ for bilayer graphene.
    \item Provide (see Sec.~\ref{s:peaks}) an approximation for the transmission probability $p(\theta)$
        around a peak, showing that such a peak has (approximately) a standard
        Lorenz---Breit---Wigner form
        $$p(\theta)
       \approx \frac{p_0}{1+c(\theta-\theta_0)^2}$$
    \item Provide (see Sec.~\ref{s:algorithm}) a ``two-level'' method of finding the transmission probabilities
        in bilayer graphene that removes the exponential growth problem.
\end{enumerate}

\section{Symmetric case: true magic angles}\label{s:symmetric}
The key argument in both symmetric cases will be the following symmetry of the graphene equations:
\begin{equation}\label{eq:def-T}
T\colon
\left(\begin{matrix}
\Psi_1(x,y) \\ \Psi_2(x,y)
\end{matrix} \right)
\mapsto
\left( \begin{matrix}
\overline{\Psi}_2(-x,-y) \\ \overline{\Psi}_1(-x,-y)
\end{matrix} \right).
\end{equation}
It maps the solution of~\eqref{eq:D} for a potential $U(x,y)$ to the solution of~\eqref{eq:D} for the potential turned by 180$^{\circ}$ $U(-x,-y)$. Hence if $U(x,y)=U(x)$, as we will assume through the rest of the paper, and $U(-x)=U(x)$ (as we consider now the symmetric case), the map $T$ sends the solutions of~\eqref{eq:D} to the solutions of~\eqref{eq:D}. Finally, it is easy to notice that $T$ \emph{preserves} the direction of a flat wave.

It turns out that this symmetry of the problem reduces the number of independent equations. This is the same scenario that occurs, for instance, for the equation describing the Josephson junction~\cite{FGKS,KR14}.

\subsection{Single-layer case}\label{s:single-layer}
For the sake of mathematical completeness, let us state the problem formally. As
$U=U(x)$ does not depend on $y$, the
eigenfunctions can be tried in the form $\Psi(x,y)=\Psi(x,0)e^{i a y}$, where the corresponding eigenvalue of the operator $\hat{p}_y$ is equal to~$a\hbar$.
After denoting, by a slight abuse  of notation, $\Psi(x)=\Psi(x,0)$, the equation~\eqref{eq:D} becomes the \emph{ordinary} differential equation
\begin{equation}\label{eq:D-x-1}
\left(
\begin{matrix}
U(x)-E &   -iV_F\hbar (\partial_x +a) \\
-i V_F \hbar (\partial_x -a) & U(x)-E
\end{matrix}
\right)\Psi(x) =0,
\end{equation}
or, equivalently,
\begin{equation}\label{eq:D-x-1-explicit}
\partial_x \Psi(x)= \left(
\begin{matrix}
a & -i \frac{U(x)-E}{V_F\hbar} \\
 -i \frac{U(x)-E}{V_F\hbar} & -a
\end{matrix}
\right)\Psi(x).
\end{equation}

In the domain $U=0$ (that is, to the left or to the right of the barrier), the
solution of~\eqref{eq:D-x-1} is a linear combination of left- and right-going
waves $\psiL(x)=e^{-ikx}v_L$ and $\psiR(x)=e^{ikx}v_R$ respectively. Here $\pm ik,$ \, where $a^2+k^2=\left(\frac{E}{\hbar V_F}\right)^2$, are the eigenvalues of the right hand-side operator in~\eqref{eq:D-x-1-explicit}, and
$$
v_R= \left(\begin{matrix}
a+ik \\ i E/V_F\hbar
\end{matrix}
\right) \quad\text{and}\quad
v_L= \left(\begin{matrix}
a-ik \\ i E/V_F\hbar
\end{matrix}
\right)
$$
are the corresponding eigenvectors. For a wave of energy $E$ falling at the angle $\theta$ we thus have
$$
a=\frac{E}{\hbar V_F}\sin \theta, \quad  k=\frac{E}{\hbar V_F}\cos \theta;
$$
we can also rewrite the eigenvectors as
\begin{equation}\label{eq:vRLe}
v_R=\frac{E}{V_F\hbar}\left(\begin{matrix}
e^{i\theta} \\ i
\end{matrix}
\right) \quad\text{and}\quad
v_L= \frac{E}{V_F\hbar}\left(\begin{matrix}
e^{-i\theta} \\ i
\end{matrix}
\right).
\end{equation}

To find the transmission and reflection probabilities for a given $\theta$, one
then looks for the solution of~\eqref{eq:D-x-1} that is of the form

\begin{equation}\label{eq:t-and-r}
    \sO(x)=\begin{cases}
        \psiR(x)+r(\theta) \psiL(x) &\!\!\text{on the left of the barrier},\\
            t(\theta) \psiR(x) & \!\!\!\!\!\!\text{on the right of the barrier}.
    \end{cases}
\end{equation}
Here $r(\theta)$ and $t(\theta)$ are respectively reflection and transition
amplitutes, and the corresponding probabilities are the squares of their absolute
values.

In a general quantum mechanical setting, these amplitudes are general complex numbers with $|r(\theta)|^2+ |t(\theta)|^2=1$; the ``no-reflection'' condition $r=0$ on a \emph{complex} number $r$ cannot be satisfied in a generic one-parametric family.
It turns out, that in our particular situation of a symmetric potential, the numbers $r(\theta)$ and $t(\theta)$ satisfy an additional relation. Namely, we have the following
\begin{proposition}\label{p:qt}
The function $q(\theta)=i e^{-i\theta} \frac{r(\theta)}{t(\theta)}$, where $r(\theta)$ and $t(\theta)$ are defined by~\eqref{eq:t-and-r}, is real-valued for all values $\theta$.
\end{proposition}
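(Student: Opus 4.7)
The plan is to exploit the antilinear symmetry $T$ of~\eqref{eq:def-T}. After the reduction to the ODE~\eqref{eq:D-x-1-explicit}, $T$ acts on a 2-spinor $\psi(x)=(\psi_1(x),\psi_2(x))$ by $(T\psi)(x) = (\overline{\psi_2(-x)},\overline{\psi_1(-x)})$, and a short check confirms that for $U(-x)=U(x)$ this map sends solutions of~\eqref{eq:D-x-1-explicit} to solutions. My strategy is to produce a second solution $\chi := T\sO$, read off its asymptotics on both sides of the barrier, and exploit conservation of the Wronskian $W := \det[\chi,\sO]$ (which is constant because the generator in~\eqref{eq:D-x-1-explicit} is traceless).

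First I would compute the action of $T$ on the plane-wave building blocks. Using~\eqref{eq:vRLe} together with the fact, noted in the paper, that $T$ preserves the direction of a flat wave, a direct calculation gives
\begin{equation*}
T\psiR = -ie^{-i\theta}\psiR, \qquad T\psiL = -ie^{i\theta}\psiL.
\end{equation*}
To find the asymptotics of $\chi$ one must combine two effects: the antilinearity of $T$ (scalar coefficients get conjugated) and the reflection $x\mapsto -x$ built into it, so that the behavior of $\chi$ at $x\to -\infty$ is controlled by that of $\sO$ at $x\to +\infty$, and vice versa. Combining these facts with the scattering ansatz~\eqref{eq:t-and-r} yields
\begin{equation*}
\chi = -i\bar t\,e^{-i\theta}\psiR \quad\text{on the left,}
\end{equation*}
and
\begin{equation*}
\chi = -ie^{-i\theta}\psiR - i\bar r\,e^{i\theta}\psiL \quad\text{on the right.}
\end{equation*}

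Next I would compute $W$ twice. At $x\to -\infty$, only the $r\psiL$ term of $\sO$ contributes, giving $W = -ir\bar t\,e^{-i\theta}\det[\psiR,\psiL]$; at $x\to +\infty$, only the $\psiL$-part of $\chi$ survives the pairing with $\sO = t\psiR$, giving $W = i\bar r t\,e^{i\theta}\det[\psiR,\psiL]$. Since $\det[\psiR,\psiL]\neq 0$ for $\theta\neq 0$, equating the two expressions forces
\begin{equation*}
r\bar t\,e^{-i\theta} + \bar r t\,e^{i\theta} = 0,
\end{equation*}
which is exactly the condition $q(\theta)=\overline{q(\theta)}$. The case $\theta=0$ is trivial since Klein tunneling forces $r=0$, hence $q=0$.

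The main difficulty here is purely bookkeeping: one has to juggle simultaneously the antilinearity of $T$, the spinor swap in its definition, and the side-swapping reflection $x\mapsto -x$ (incoming amplitudes on the left become outgoing on the right and conversely, with complex-conjugated coefficients). Once this is organized, the single complex identity $W(-\infty)=W(+\infty)$ delivers exactly the one real relation between $r(\theta)$ and $t(\theta)$ asserted by the proposition.
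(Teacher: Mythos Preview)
Your argument is correct and is essentially the paper's own proof. The paper's conserved bilinear form $Q$ is precisely the $2\times 2$ Wronskian $\det(\psi_1\,\psi_2)$, so your pairing $W=\det[T\sO,\sO]$ is the paper's $Q(T\sO,\sO)$; the only organizational difference is that the paper first notes abstractly that $Q(\psi,T\psi)\in\mathbb{R}$ for any solution $\psi$ and then evaluates on a single side for $\Psi=\sO/t(\theta)$, whereas you evaluate on both sides and equate, which amounts to verifying that same reality statement directly.
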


Instead of the function $q(\theta)$, it is more convenient to consider its differently normalized version: the function $f(\theta)=\frac{q(\theta)}{\sqrt{1+q^2(\theta)}}$. This real-valued function, on the one hand, satisfies $|f(\theta)|=|r(\theta)|$, and hence its zeros are exactly the magic angles: see Fig.~\ref{fig:single-zero}. On the other hand, this real-valued function generically has simple zeroes (and as it takes values of both signs, its zeroes cannot be removed by a small perturbation).

\begin{figure}[hbt]
    \centering
    \includegraphics[width=8.5cm]{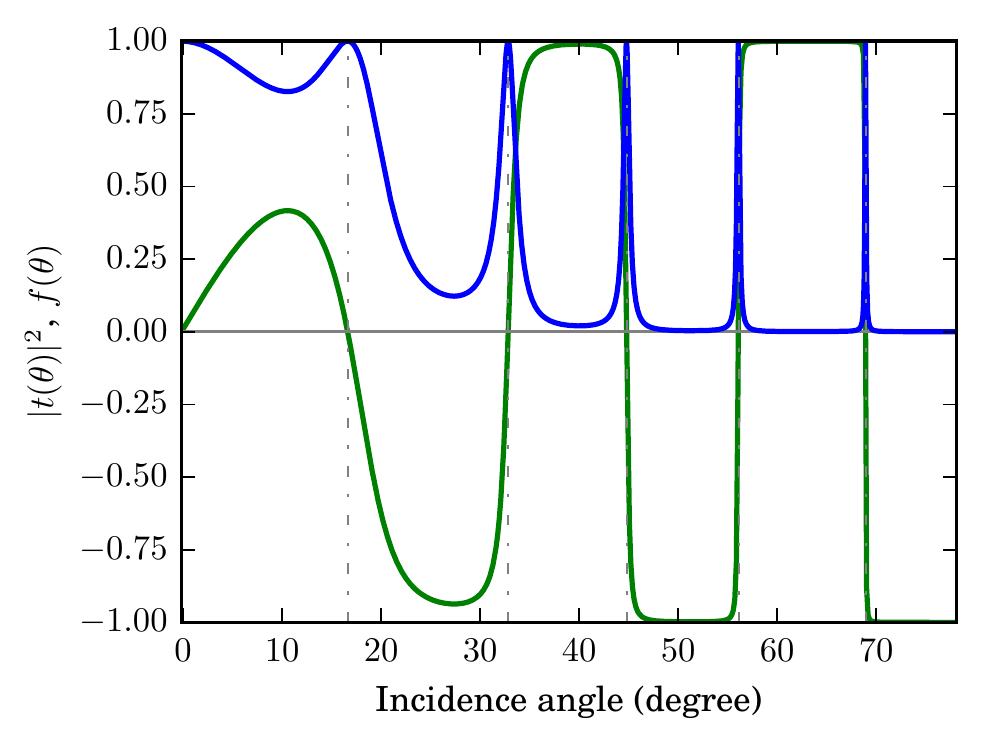}
    \caption{
        {Blue \paperj{(dotted) }line: the transmission probability for the symmetric potential from Fig.~\ref{fig:singlelayer}~(a,c).
        Green \paperj{(solid) }line: real-valued analytic function $f(\theta)$ such that $|r(\theta)|=|f(\theta)|$. Its zeroes correspond to magic angles.}
    }\label{fig:single-zero}
\end{figure}

\begin{proof}

First note that the map $T$, given by~\eqref{eq:def-T}, descends on the space of solutions
of~\eqref{eq:D-x-1}, preserving $a$:
\begin{equation}
    T\colon \begin{pmatrix}
        \Psi_1(x)\\
        \Psi_2(x)
    \end{pmatrix}
    \mapsto
    \begin{pmatrix}
        \widebar{\Psi}_2(-x)\\
        \widebar{\Psi}_1(-x)
    \end{pmatrix}.
\end{equation}

Second, remark that the flow~\eqref{eq:D-x-1-explicit} preserves the \emph{bilinear} antisymmetric form with the matrix
$$
    Q=\begin{pmatrix}
        0 & 1\\
        -1 & 0
    \end{pmatrix},
$$
that is, for any two solutions~$\psi_1, \psi_2$ the value $Q\left(\psi_1,\psi_2\right) = \psi_1^t(x) Q \psi_2(x)$ (``t'' means transposed) does not depend on $x$ (and thus $Q$ becomes a well-defined bilinear form on the space of solutions). The easiest way to see it is to say that the matrix of the flow~\eqref{eq:D-x-1-explicit} has zero trace, and hence the flow map from $x=x_1$ to $x=x_2$ has determinant one, while $\psi_1^t Q \psi_2 = \det (\psi_1 \, \psi_2)$.

It is also immediate to check that $Q(T\psi_1,T\psi_2)=\overline{Q(\psi_2,\psi_1)}$, that can be re-stated as
$$
Q(\psi_1,T\psi_2)=\overline{Q(\psi_2,T\psi_1)}.
$$
This remark easily implies the following observation
: for any solution $\psi$ of~\eqref{eq:D-x-1-explicit}, the value $Q(\psi, T\psi)$ is purely real.

Now, let us apply this remark to the solution~$\Psi=\frac{1}{t(\theta)}\sO$, where $\sO$ is of the form~\eqref{eq:t-and-r}. On one hand, $Q(\Psi,T\Psi)$ is a real number. On the other hand, calculating it at any point $x_0$ on the left of the barrier, we get
\begin{multline*}
Q(\Psi,T\Psi) = Q(\frac{1}{t(\theta)}\psiR(x_0)+\frac{r(\theta)}{t(\theta)}\psiL(x_0), (T \psiR) (x_0))=
\\ =\frac{r(\theta)}{t(\theta)} Q(v_L,Tv_R)=\frac{r(\theta)}{t(\theta)}\cdot (e^{- 2i\theta} -1) \frac{E^2}{V_F^2 \hbar^2},
\end{multline*}
where we have used~\eqref{eq:vRLe} for the last equality. Then,
$$
(e^{- 2i\theta} -1) \frac{E^2}{V_F^2 \hbar^2} = -e^{-i\theta} \cdot 2i \sin \theta \cdot \frac{E^2}{V_F^2 \hbar^2} = -i e^{-i\theta} \cdot \frac{2 aE}{V_F \hbar},
$$
and hence
$$
Q(\Psi,T\Psi) = - q(\theta) \cdot \frac{2 aE}{V_F \hbar}.
$$
The left hand side of the last equality is real, and thus so is $q(\theta)$. 

\end{proof}

To conclude this paragraph, let us restate the condition for an angle being magic, $r(\theta)=0$, in two different ways. First, due to the proof of Prop.~\ref{p:qt} it is equivalent to $Q(\sO,T\sO)=0$; calculation of this form at $x=0$ gives us
$$
Q(\sO,T\sO)=|\Psi_{R,1}(0)|^2-|\Psi_{R,2}(0)|^2.
$$
Second, the form $Q$ is non-degenerate, so $Q(\psi_1,\psi_2)=0$ if and only if $\psi_1$ and $\psi_2$ are proportional. We finally get
\begin{proposition}
    The angle $\theta$ is magic if and only if
    $$|\Psi_{R,1}(0)|=|\Psi_{R,2}(0)|,$$
    and if and only if $T \sO$ is proportional to~$\sO$.
\end{proposition}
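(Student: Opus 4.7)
My plan is to assemble the three parts of the proposition by piecing together the computation $Q(\sO,T\sO)$ at $x=0$ with the observation, already extracted in the proof of Proposition~\ref{p:qt}, that this quantity vanishes exactly at magic angles.

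First I would note that the proof of Proposition~\ref{p:qt} showed, after renormalizing $\sO$ by $1/t(\theta)$, that $Q(\sO,T\sO)$ is proportional to $q(\theta)$ (with a nonzero real factor $-|t(\theta)|^2\cdot\frac{2aE}{V_F\hbar}$, using that $T$ is antilinear so it picks up a factor $\overline{t(\theta)}$). Hence, away from $a=0$, the magic condition $r(\theta)=0$ is equivalent to $Q(\sO,T\sO)=0$. The $a=0$ (normal incidence) case is automatically magic by Klein tunneling, so it does not obstruct the equivalence; I would comment on it briefly.

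Second, I would carry out the direct computation of $Q(\sO,T\sO)$ at $x=0$. Since $Q$ is preserved by the flow, we may evaluate it at any convenient point; $x=0$ is the symmetry center, where $(T\sO)(0)=\bigl(\overline{\Psi_{R,2}(0)},\overline{\Psi_{R,1}(0)}\bigr)^t$. Plugging into $\sO^t(0)\,Q\,(T\sO)(0)$ with $Q=\left(\begin{smallmatrix}0&1\\-1&0\end{smallmatrix}\right)$ gives
$$
Q(\sO,T\sO)=|\Psi_{R,1}(0)|^2-|\Psi_{R,2}(0)|^2,
$$
so the vanishing of this form is equivalent to the first stated equality of moduli.

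Third, for the proportionality characterization I would invoke non-degeneracy of $Q$ on the two-dimensional solution space of~\eqref{eq:D-x-1-explicit}: since $Q(\psi_1,\psi_2)=\det(\psi_1\,\psi_2)$, the form vanishes on a pair of nonzero solutions if and only if they are linearly dependent. Thus $Q(\sO,T\sO)=0$ is equivalent to $T\sO$ being a scalar multiple of~$\sO$, completing the chain of equivalences.

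The only delicate point I anticipate is bookkeeping around the trivial cases: ensuring $t(\theta)\neq 0$ (so that one can safely divide by it when translating between $q(\theta)=0$ and $Q(\sO,T\sO)=0$) and handling $a=0$ separately, where both sides are satisfied automatically. Everything else is a short linear-algebra computation using the already-established invariance of $Q$ and antilinearity of $T$.
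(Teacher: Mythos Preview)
Your proposal is correct and follows essentially the same route as the paper: reduce the magic-angle condition to $Q(\sO,T\sO)=0$ via the computation in the proof of Proposition~\ref{p:qt}, evaluate this form at $x=0$ to obtain $|\Psi_{R,1}(0)|^2-|\Psi_{R,2}(0)|^2$, and use that $Q(\psi_1,\psi_2)=\det(\psi_1\,\psi_2)$ is non-degenerate to get the proportionality criterion. Your extra bookkeeping around $t(\theta)\neq 0$ and the $a=0$ case is more careful than what the paper spells out, but the argument is the same.
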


\subsection{Bilayer case}\label{s:bilayer}
In the same way as in the single-layer case, we are considering the solutions of
the form $\Psi(x,y)=\Psi(x) e^{iay}$. Then equation~\eqref{eq:D} becomes
\begin{equation}\label{eq:D-x-2}
\left(
\begin{matrix}
U(x)-E &   -\frac{{\hbar}^2}{2m}(\partial_x+a)^2 \\
-\frac{{\hbar}^2}{2m}(\partial_x -a)^2 & U(x)-E
\end{matrix}
\right)\Psi(x) =0
\end{equation}
or, introducing new variables
\begin{equation}\label{eq:realcoords}
    \begin{matrix}
\tilde{U}(x) = \frac{2m}{{\hbar}^2}(U(x)-E), & \Phi_1(x) = (\partial_x -a)\Psi_1(x),
\\
& \Phi_2(x) = (\partial_x +a)\Psi_2(x),
\end{matrix}
\end{equation}
we can reduce~\eqref{eq:D-x-2} to the following equation:
\begin{equation}\label{eq:D-x-4x4}
\partial_x {\tilde{\Psi}} = \left(
\begin{matrix}
a & 1 & 0 & 0 \\
0 & a & \tilde{U}(x) & 0 \\
0 & 0 & -a & 1 \\
\tilde{U}(x) & 0 & 0 & -a
\end{matrix} \right) \tilde{\Psi},
\end{equation}
on a complex $4$-vector $${\tilde{\Psi}}(x) = (\Psi_1(x), \Phi_1(x),\Psi_2(x), \Phi_2(x))^{t}.$$

In the same way as for the single-layer case, in the domain
$U=0$, equation~\eqref{eq:D-x-4x4} has the solutions of the form (for $E>0$, otherwise some signs will be different)
$$\psiR(x)=e^{ikx} v_R, \quad \psiL(x)=e^{-ikx} v_L,$$
where
\begin{equation}\label{eq:vRL}
v_R=\left(
\begin{matrix}
a+ ik \\ -\frac{2m}{{\hbar}^2} E 
\\
-(a- ik) \\ -\frac{2m}{{\hbar}^2} E
\end{matrix}
\right)
\quad \text{ and } v_L=\left(
\begin{matrix}
a- ik \\ -\frac{2m}{{\hbar}^2} E  \\
-(a+ ik) \\ -\frac{2m}{{\hbar}^2} E
\end{matrix}
\right)
\end{equation}
and $a^2+ k^2 = \frac{2m}{{\hbar}^2} E$. The angle of incidence $\theta$ is now related to $a$ by $a=\sqrt{\frac{2m}{{\hbar}^2}E} \sin \theta$ and
$k=\sqrt{\frac{2m}{{\hbar}^2} E} \cos \theta$.

Though, the equation~\eqref{eq:D-x-4x4} has also in the domain $U=0$ the solutions of the form
$$
\psi_+(x)=e^{\lambda x}v_+,\quad \psi_-(x)=e^{-\lambda x} v_-,
$$
where
\begin{equation}\label{eq:vpm}
v_+=\left(
\begin{matrix}
a+ \lambda \\ \frac{2m}{{\hbar}^2}E \\
a- \lambda \\ -\frac{2m}{{\hbar}^2}E
\end{matrix}
\right), \quad v_-=\left(
\begin{matrix}
a- \lambda \\ \frac{2m}{{\hbar}^2}E \\
a+ \lambda \\ -\frac{2m}{{\hbar}^2}E
\end{matrix}
\right), \quad \lambda=\sqrt{a^2+\frac{2m}{{\hbar}^2}E}.
\end{equation}

To find the transmission and reflection probabilities one now has to consider
the solution of the form

\begin{equation}\label{eq:t-and-r-bilayer}
    \sB(x)=\begin{cases}
        \psiR(x)+r(\theta) \psiL(x)+\alpha_1 \psi_+(x),& x<-x_0\\
        t(\theta) \psiR(x)+\alpha_2 \psi_-(x), & x>x_0
    \end{cases}
\end{equation}
where $\supp U\subset [-x_0,x_0]$.

We then have the following
\begin{proposition}\label{p:bilayer}
The function $q(\theta)=i \frac{r(\theta)}{t(\theta)}$, where $r(\theta)$ and $t(\theta)$ are defined by~\eqref{eq:t-and-r-bilayer}, is real-valued for all values~$\theta$.
\end{proposition}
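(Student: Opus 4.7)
My plan is to mirror the proof of Proposition~\ref{p:qt}. I would construct an antisymmetric bilinear form $Q$ on the 4-dimensional solution space of~\eqref{eq:D-x-4x4} that is conserved by the flow, lift the symmetry $T$ from $(\Psi_1,\Psi_2)$ to the auxiliary 4-vector $\tilde{\Psi}$, verify that the compatibility identity $Q(\psi_1,T\psi_2)=\overline{Q(\psi_2,T\psi_1)}$ survives in this enlarged setting, and finally evaluate $Q(\sB,T\sB)$ at a point to the left of the barrier.

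The first step is essentially bookkeeping. Since $\Phi_1=(\partial_x-a)\Psi_1$ and $\Phi_2=(\partial_x+a)\Psi_2$ by~\eqref{eq:realcoords}, the action of $T$ on the $\Phi$-components is forced; a short calculation gives $T\tilde{\Psi}(x) = M\,\overline{\tilde{\Psi}(-x)}$, where $M$ is a constant real $4\times 4$ matrix that swaps indices $1\leftrightarrow 3$ and $2\leftrightarrow 4$ with an extra sign in the $\Phi$-blocks (coming from reversing the derivative in $-x$). That $T$ maps solutions of~\eqref{eq:D-x-4x4} to solutions in the symmetric case is then equivalent to the matrix identity $AM+MA=0$, where $A$ denotes the flow matrix in~\eqref{eq:D-x-4x4}; this is a routine check.

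For the bilinear form, I look for a real antisymmetric matrix $Q$ satisfying $A^tQ+QA=0$. Since $A$ splits into a diagonal ``$a$-part,'' a strictly triangular constant part, and a $\tilde{U}$-dependent part with disjoint sparse supports, the equation decouples by powers of $a$ and of $\tilde{U}$ and pins $Q$ down uniquely up to scalar. A companion matrix identity $MQ+QM=0$ is then immediate, and by the same transposition-and-conjugation manipulation as in the proof of Proposition~\ref{p:qt} it yields $Q(\psi,T\psi)\in\mathbb{R}$ for every solution~$\psi$.

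The substantive step is the final one, and this is where I expect the difficulty. Taking $\Psi=\sB/t(\theta)$ and evaluating $Q(\Psi,T\Psi)$ to the left of the barrier, I would expand using the full ansatz~\eqref{eq:t-and-r-bilayer}, which on both sides involves the evanescent modes $\psi_\pm$. Applying $M$ to the eigenvectors~\eqref{eq:vRL}--\eqref{eq:vpm} gives $Mv_R=-v_L$, $Mv_L=-v_R$, and $Mv_\pm=v_\mp$, hence $T\psiR=-\psiR$, $T\psiL=-\psiL$, and $T\psi_\pm=\psi_\mp$. The apparent difficulty is that cross terms involving $r\bar{\alpha}_2$, $\alpha_1\bar{t}$, and $\alpha_1\bar{\alpha}_2$ threaten to contaminate the expression. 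The resolution—and the key calculation to verify—is that the ``mixed'' propagating/evanescent pairings $Q(v_R,v_+)$, $Q(v_L,v_+)$, and their $v_-$ analogues all vanish identically; this follows by direct substitution into the explicit formula for $Q$. Once this vanishing is in hand, only the propagating term $-r\bar{t}\,Q(v_L,v_R)$ survives, with $Q(v_L,v_R)$ turning out purely imaginary; the reality of $Q(\Psi,T\Psi)$ from the previous step then reads $ir\bar{t}\in\mathbb{R}$, equivalently $ir/t\in\mathbb{R}$, which is the claim.
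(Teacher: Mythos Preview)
Your proposal is correct and follows essentially the same route as the paper: lift $T$ to the four-component system, use the conserved antisymmetric bilinear form $Q$ together with the identity $Q(\psi_1,T\psi_2)=\overline{Q(\psi_2,T\psi_1)}$, and evaluate $Q(\sB,T\sB)$ on the left of the barrier so that only the $-r\bar t\,Q(v_L,v_R)$ term survives. The paper's only cosmetic difference is that it disposes of the mixed propagating/evanescent pairings by the general observation that $Q(v_1,v_2)=0$ whenever the corresponding eigenvalues do not sum to zero, rather than by direct substitution; note also a harmless slip in your last paragraph, where the coefficient should read $-(r/t)$ if you really normalize by $t$, but either normalization yields $i\,r/t\in\mathbb{R}$.
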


Once again, instead of the function $q(\theta)$ we can consider the function $f(\theta)=\frac{q(\theta)}{\sqrt{1+q^2(\theta)}}$, which is more convenient due to the relation $|r(\theta)|=|f(\theta)|$. Zeroes of $f(\theta)$ correspond to the magic angles, and are generically simple (see Fig.~\ref{fig:bilayer-zero}).

\begin{figure}[hbt]
    \centering
    \includegraphics[width=8.5cm]{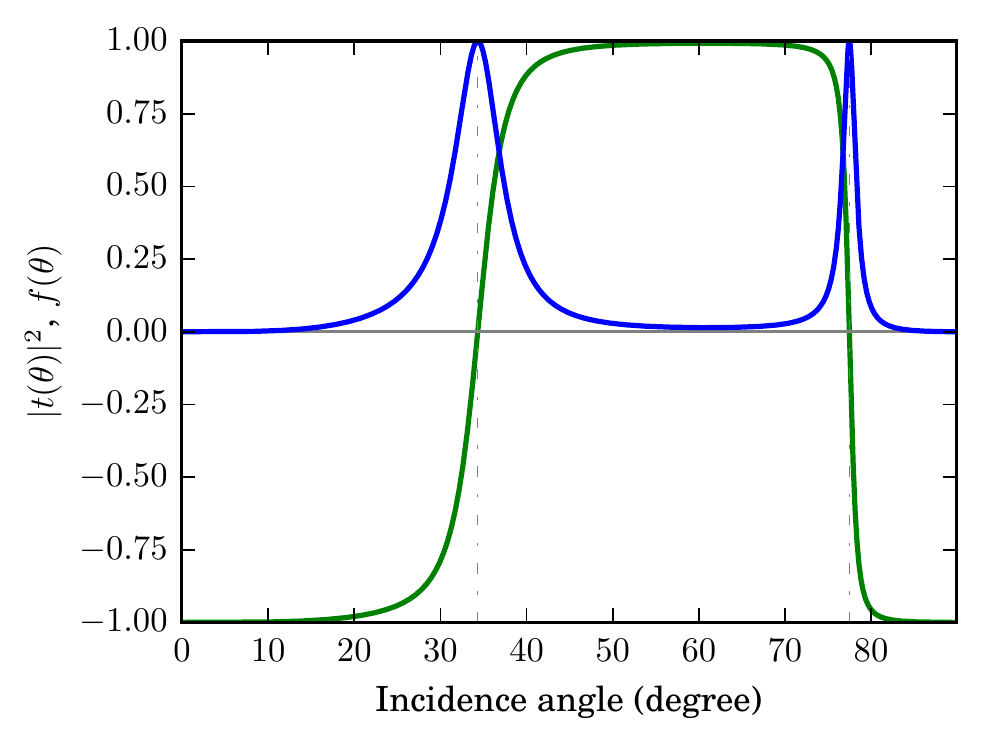}
    \caption{
        {   Blue \paperj{(dotted)} line: the transmission probability for the symmetric potential from Fig.~\ref{fig:bilayer}~(a,c).
        Green \paperj{(solid)} line: 
         real-valued analytic function $f(\theta)$ such that $|r(\theta)|=|f(\theta)|$. Again, its zeroes correspond to magic angles.
        } 
    }\label{fig:bilayer-zero}
\end{figure}

\begin{proof}

In the same way as in the single-layer case, the map $T$ descends on the
space of solutions of~\eqref{eq:D-x-4x4}, preserving $a$, and becomes:
\begin{equation}
        T\colon
        \begin{pmatrix}
            \Psi_1(x)\\
            \Phi_1(x)\\
            \Psi_2(x)\\
            \Phi_2(x)
        \end{pmatrix}
        \mapsto
        \begin{pmatrix}
            \overline{\Psi}_2(-x)\\
            -\overline{\Phi}_2(-x)\\
            \overline{\Psi}_1(-x)\\
            -\overline{\Phi}_1(-x)
        \end{pmatrix}.
\end{equation}

Also in the same way as before, we note that the flow of the equation~\eqref{eq:D-x-4x4} preserves a bilinear antisymmetric form. Namely, note first that is preserves a  \emph{sesquilinear} form with the matrix
$$Q=\begin{pmatrix}
    0 & 0 & 0 & 1\\
    0 & 0 & -1 & 0\\
    0 & 1 & 0 & 0\\
    -1 & 0 & 0 & 0
\end{pmatrix}.$$
This can be checked by explicit computation; though, there is a simple physical
interpretation for it. Namely, $\frac{1}{i}\Phi^*(x_1)Q\Phi(x_1)$ measures the current density
through the section $x=x_1$ (see, e.g.~\cite[Eqs. (136)--(137)]{TRK12}), so the conservation of the form is merely the
conservation of number of particles.

At the same time, both matrices of the flow and of
the form $Q$ are purely real. Hence, the flow also preserves an anisymmetric
\emph{bilinear} form with the same matrix~$Q$. Hence~$Q(\cdot,\cdot)$ is a
well-defined antisymmetric form on the space of solutions of~\eqref{eq:D-x-4x4}:
the value $(\widetilde{\Psi}_{(1)}(x))^{t} Q \widetilde{\Psi}_{(2)}(x)$
does not depend on the choice of the point~$x$.

Finally, it is easy to see that we have the following
\begin{lemma}
For any two solutions $\widetilde{\psi}_1, \widetilde{\psi}_2$ of~\eqref{eq:D-x-4x4} one has
$$
Q(\widetilde{\psi}_1,T\widetilde{\psi}_2)=\overline{Q(\widetilde{\psi}_2,T\widetilde{\psi}_1)}.
$$
In particular, $Q(\widetilde{\psi}_1,T\widetilde{\psi}_1)$ is always real.

For the eigenvectors of the matrix of the flow~\eqref{eq:D-x-4x4} at zero potential, the only pairs giving nonzero product $Q(\widetilde{\psi}_1,T\widetilde{\psi}_2)$ are
$$
Q(v_R, Tv_L) = - Q(v_L,Tv_R) = 4ik\cdot\frac{2mE}{{\hbar}^2},
$$
and
$$
Q(v_+,Tv_+)=-Q(v_-,Tv_-) = - 4\lambda\cdot \frac{2mE}{{\hbar}^2}.
$$
\end{lemma}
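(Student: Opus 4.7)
My plan is to treat the two parts separately, using the structural observation that the map $T$ factors as $T=A\circ K\circ R$, where $A$ is a fixed real $4\times 4$ matrix permuting components and flipping signs on the $\Phi$-components, $K$ is componentwise complex conjugation, and $R$ is the spatial reflection $x\mapsto -x$. Since $Q(\widetilde{\psi}_1,T\widetilde{\psi}_2)$ is independent of $x$ (the form $Q$ is preserved by the flow of~\eqref{eq:D-x-4x4}), I evaluate it at $x=0$, where $R$ is trivial, and the expression reduces to $\widetilde{\psi}_1(0)^{t}(QA)\overline{\widetilde{\psi}_2(0)}$.

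For the first identity, I reduce the claim to the statement that the real matrix $QA$ is symmetric. Indeed, for any real symmetric $M$ one has $u^{t}M\overline{w}=\overline{w^{t}M\overline{u}}$, obtained by transposing the scalar $w^{t}M\overline{u}$ and using $M^{t}=M$. Hence it suffices to multiply out $QA$ once and observe symmetry by inspection. Reality of $Q(\widetilde{\psi}_1,T\widetilde{\psi}_1)$ is then the specialization $\widetilde{\psi}_2=\widetilde{\psi}_1$.

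For the second part, I would first determine how $T$ acts on the four free-flow eigenfunctions. A direct substitution of~\eqref{eq:vRL} and~\eqref{eq:vpm} into the definition of $T$ yields
$$T\psi_R=-\psi_R,\qquad T\psi_L=-\psi_L,\qquad T\psi_+=\psi_-,\qquad T\psi_-=\psi_+.$$
The second ingredient is that for any pair of eigenfunctions $\psi_i(x)=e^{\mu_i x}v_i$ of the free flow, the quantity $Q(\psi_i,\psi_j)(x)$ has pure exponential $x$-dependence $e^{(\mu_i+\mu_j)x}$; since it is constant in $x$, it must vanish whenever $\mu_i+\mu_j\neq 0$. Combined with the previous display, this immediately eliminates every pair $(v_i,Tv_j)$ except the four listed in the statement.

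For those four surviving products I would simply compute. Using $T\psi_{R,L}=-\psi_{R,L}$ and $T\psi_\pm=\psi_\mp$, the task reduces to evaluating $v_R^{t}Qv_L$ and $v_+^{t}Qv_-$ directly from the explicit eigenvectors~\eqref{eq:vRL} and~\eqref{eq:vpm}, and the claimed values $\pm 4ik\cdot\tfrac{2mE}{\hbar^2}$ and $\mp 4\lambda\cdot\tfrac{2mE}{\hbar^2}$ fall out after short arithmetic. The step that I expect to need genuine care is tracking the sign pattern in $A$ when verifying $T\psi_+=\psi_-$: the reflection $R$ converts $e^{\lambda x}$ into $e^{-\lambda x}$, and the sign flips of $A$ in the $\Phi$-components must then precisely convert the coefficients of $v_+$ into those of $v_-$; everything else is bookkeeping.
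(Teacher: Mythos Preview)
Your argument is correct and follows essentially the same route as the paper's proof. The paper simply declares the first identity ``immediate'' and says the second part follows by explicit computation together with the observations $T(v_R)=-v_R$, $T(v_L)=-v_L$, $T(v_\pm)=v_\mp$ and the eigenvalue-sum vanishing principle---exactly the ingredients you use. Your factorization $T=A\circ K\circ R$ and the reduction to the symmetry of the real matrix $QA$ is a clean and explicit way of unpacking what the paper leaves as ``immediate''; indeed one computes
\[
QA=\begin{pmatrix}0&-1&0&0\\-1&0&0&0\\0&0&0&-1\\0&0&-1&0\end{pmatrix},
\]
which is visibly symmetric, so your plan goes through without difficulty.
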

\begin{proof}
The first part is immediate; the equalities in the second part can be checked by an explicit computation. Finally, to check that all the other likewise products vanish, note, that
$$
T(v_R)= -v_R, \,\, T(v_L)=- v_L, \,\, T(v_+)= v_-, \,\, T(v_-)= v_+,
$$
and that due to the invariance of $Q$ by the flow the $Q$-product $Q(v_1,v_2)$ on two eigenvectors $v_1,v_2$ can be non-zero only if the sum of corresponding eigenvalues vanishes.
\end{proof}

Take now
\begin{align*}
\widetilde{\Psi}(x)&:=\frac{1}{t(\theta)} \sB(x)
\\ & =\begin{cases}
        \frac{1}{t(\theta)}\psiR(x)+\frac{r(\theta)}{t(\theta)} \psiL(x)+\frac{\alpha_1}{t(\theta)} \psi_+(x),& x<-x_0,\\
         \psiR(x)+\frac{\alpha_2}{t(\theta)} \psi_-(x), & x>x_0.
    \end{cases}
\end{align*}
Then, $Q(\widetilde{\Psi}, T(\widetilde{\Psi}))$ is a real number. Note that at $x<x_0$, we have $(T\widetilde{\Psi})(x)=-\psiR(x)+\overline{(\alpha_2/t(\theta))} \psi_+(x))$, and hence
$$
Q(\widetilde{\Psi}, T(\widetilde{\Psi})) = \frac{r(\theta)}{t(\theta)} Q(v_L,Tv_R) = -4ik\cdot \frac{2mE}{{\hbar}^2} \cdot \frac{r(\theta)}{t(\theta)}
$$
(The expression in the middle is the only term that does not vanish.) As the expression in the left hand side is real, so is $q(\theta)=i \frac{r(\theta)}{t(\theta)}$.
\end{proof}


\section{Transmission-blocking example}\label{s:no-transmission}

The above arguments show that in the symmetric case, if magic angles were present for some potential $U$, they usually can not be removed by its \emph{small} perturbation. Though, these arguments do not imply that the magic angles should be present for \emph{any} symmetric potential: indeed, a real function $q(\theta)$ generically is not \emph{obliged} to have real roots.

Indeed, one can construct an example of a symmetric barrier, the probability of transmission through which is quite small. The potential that we construct is a sufficiently quickly oscillating one, being a series of four n-p-n barriers. Namely, fix the height of barriers $U_0=50 meV$ and the n-p and p-n junction widths $l_1=l_3=\dots=l_{15}=10nm$ and the pairwise distances and widths $l_2=l_4=\dots=l_{14}=10nm$. Then, take
$$
x_0=0, \quad x_{i+1}=x_i+l_i, \quad i=1,\dots,15
$$
and define
\begin{equation}\label{eq:four-pot}
U(x)=\left\{\begin{array}{l}
0, \quad \,\,\,\text{if $x<x_0$ or $x>x_{15}$ or $x\in [x_{4i-1},x_{4i}]$},\\
U_0, \quad \text{if $x\in [x_{4i+1},x_{4i+2}]$},\\
U_0 \cdot \frac{1}{2} \left( 1+ \tanh (10 (x-\frac{x_{4i}+x_{4i+1}}{2})/l_{4i}) \right),
\\ \qquad \,\,\,\text{if $x\in[x_{4i},x_{4i+1}]$},\\
U_0 \cdot \frac{1}{2} \left( 1- \tanh (10 (x-\frac{x_{4i+2}+x_{4i+3}}{2})/l_{4i+2}) \right),
\\ \qquad \,\,\,\text{if $x\in[x_{4i+2},x_{4i+3}]$}
\end{array}
\right.
\end{equation}
(see Fig.~\ref{fig:potential}; compare with \cite[Eq. (135)]{TRK12}).

\begin{figure}[hbt]
    \centering
    \includegraphics[width=8.5cm]{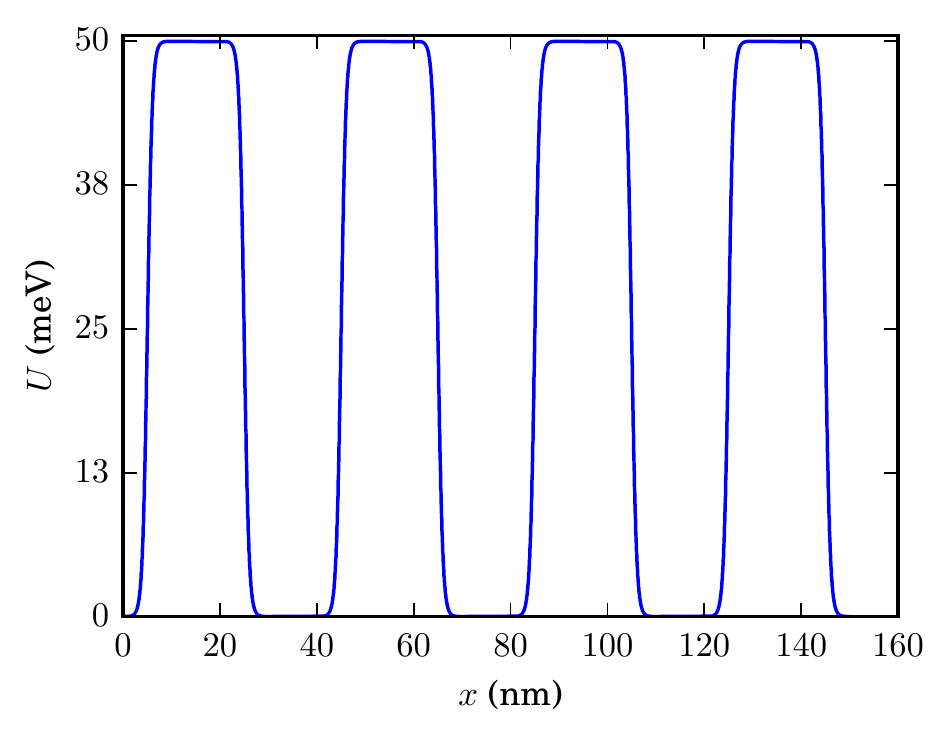}
    \caption{
        {Fast-oscillating potential $U(x)$.}
    }\label{fig:potential}
\end{figure}

The fast oscillations of the potential $U(x)$ prevent the appearances of ``resonances'' between the junctions, and allow to block the transmission in a band of energies sufficiently close to~$U_0/2$. Namely, in the energy band from 20 meV to 30 meV, the transmission probability $p(\theta)=|t(\theta)|^2$ for any angle of incidence~$\theta$ does not exceed $2\cdot 10^{-8}$\todo{Possibly: provide two examples of bands; possibly: draw examples with 3 and with 4 jumps.}: see Fig.~\ref{fig:p-max}.

\begin{figure}[hbt]
    \centering
    \begin{tabular}{c}
        \subfloat[]{\includegraphics[width=8.2cm]{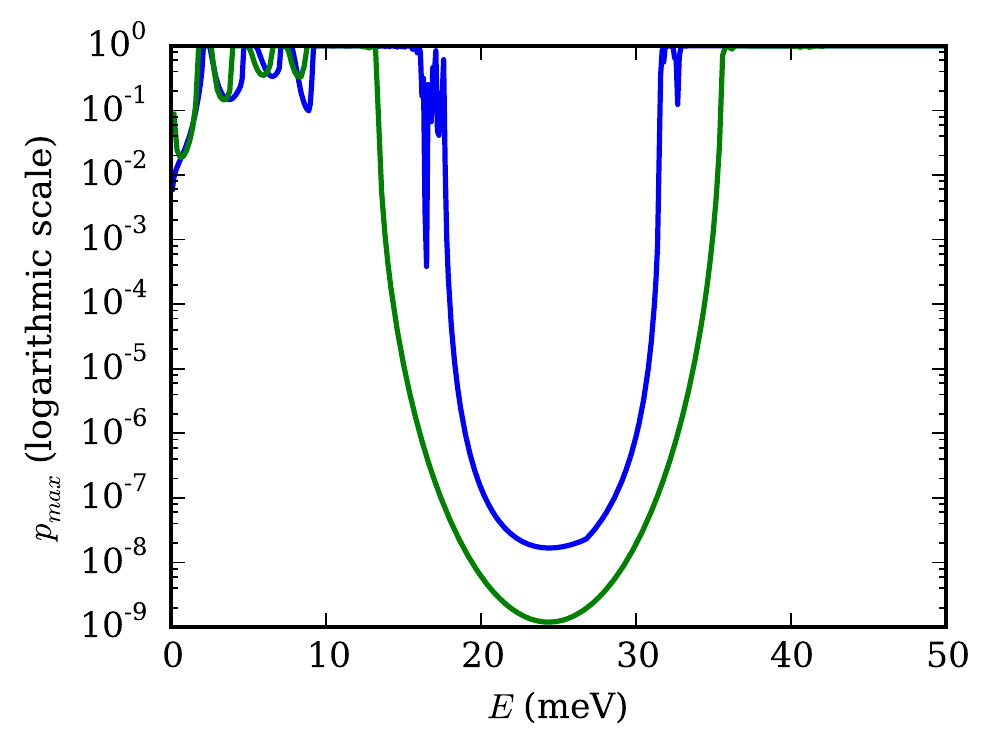}}\\
        \subfloat[]{\includegraphics[width=8.5cm]{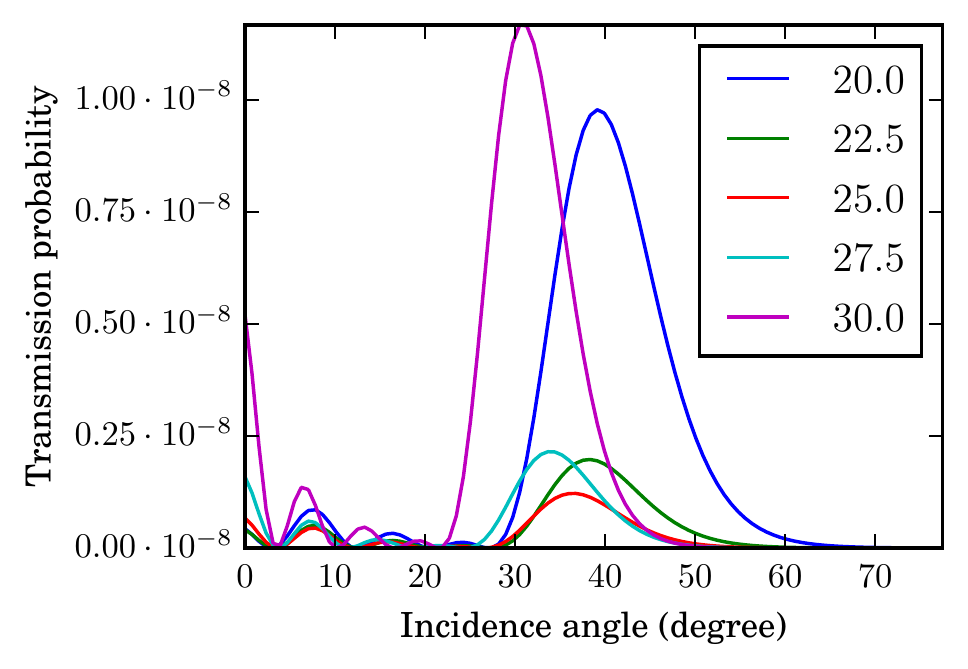}}
    \end{tabular}
    \caption{
        {(a) Green line: maximal transmission probability $p_{max}=\max_{\theta} p(\theta)$ for the
        potential~\eqref{eq:four-pot} as a function of the given energy~$E$ of the wave. Note that in the band E=20..30~meV 
        the transmission probability does not exceed~$2\cdot 10^{-8}$.
        Blue line: maximal transmission probability for the cosine-oscillating potential~\eqref{eq:four-cos}. Note, that in the same energy band the transmission probability 	is still very small (it does not exceed $10^{-6}$). (b) transmission probability for the
        potential~\eqref{eq:four-pot} as a function of the angle $\theta$ for energies $E=20,22.5,25,27.5,30$~meV. }
    }\label{fig:p-max}
\end{figure}

Further increase in number of barriers allows further reducing of transmission probability in this band (or of slight widening of the band where a given upper estimate on $|t(\theta)|^2$ holds). 
We will discuss in the next section a reason of why a blocking potential should be looked for among fast-oscillating ones.
We conclude this section by noticing that other fast-oscillating potentials exhibit similar behavior. For instance, considering the potential
\begin{equation}\label{eq:four-cos}
U(x)= \begin{cases}
U_0\cdot \frac{1}{2} (1-\cos (\pi x/ 2 l_1)),  & x\in [0, 16l_1]\\
0 & \text{otherwise},
\end{cases}
\end{equation}
we get a similar behavior for the maximal transmission probability: see Fig.~\ref{fig:p-max}, top.

\section{Bilayer graphene: generic one-dimensional potential}
\subsection{Approximate magic angles}\label{s:approximate}

We start by considering the example that was studied
in~Ref.\onlinecite{TRK12}: an asymmetric n-p-n junction. First, note that the
corresponding angles are not \emph{exactly} magic: though they seem to be such,
precise computations show that the local minimal values of $|r(\theta)|$ in these cases
are $\approx{}0.029$ and $\approx{}0.053$ at the angles $\theta\approx{}30^{\circ}$  and $\theta\approx{}65^{\circ}$ respectively
(and hence the local maxima of the transmission probability $|t(\theta)|^2$ are respectively $\approx{}0.9991$ and $\approx{}0.997$); see Fig.~\ref{fig:zoomin}.

\begin{figure}[hbt]
    \centering
    \begin{tabular}{cc}
        \subfloat[]{\includegraphics[width=4.3cm]{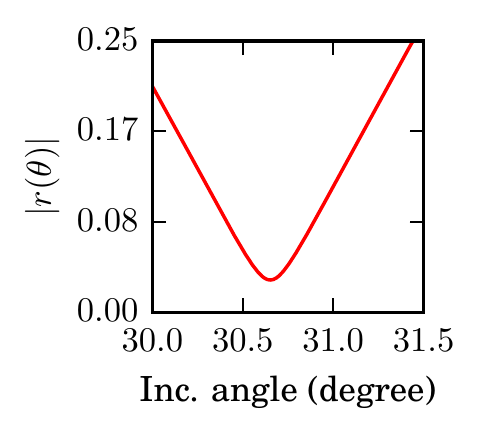}} &  
        \subfloat[]{\includegraphics[width=4.3cm]{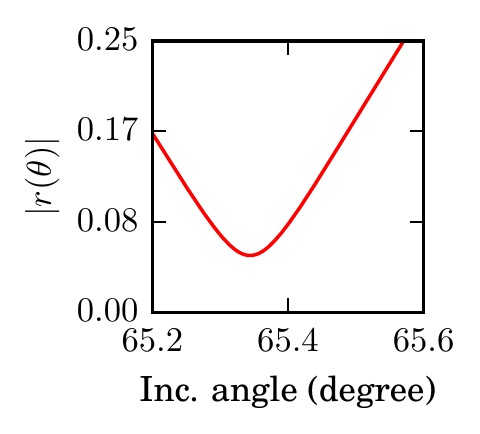}}
    \end{tabular}
    \caption{
        {Zoom in: absolute value of the reflection amplitude $|r(\theta)|$ for n-p-n junction in bilayer graphene near magic angles.
        Settings are the same as in Figure~\ref{fig:bilayer}~(b,d).}
    }\label{fig:zoomin}
\end{figure}

Thus the correct mathematical question is not to explain the \emph{exact} equality but to
explain why the minimal value of $|r(\theta)|$ is so small. 

The first idea here would be to compare this asymmetric potential to a close symmetric one. Namely, 
in addition to the asymmetric potential with junction widths
$l_1=20$~nm and $l_3=40$~nm and flat p-part of width $l_2=100$~nm, consider a symmetric one with the junction widths $l_1=l_3=20$~nm and flat p-part of width $l_2=110$~nm (so that the middle of the p-n junction does not move). It turns out that the transition probabilities for these barriers are quite close to each other (see Fig.~\ref{fig:sym-vs-asym}, red and green lines). 
At the same time, latter potential possesses \emph{exact} magic angles due to the same arguments as in Sec.~\ref{s:bilayer}. Given this, one could 
argue that the reason for almost-magic angles is just that the solutions of two equations, the one for the symmetric potential and for the asymmetric 
one, are sufficiently close to each other.

However, a closer examination of the graph on Fig.~\ref{fig:sym-vs-asym} shows that this argument alone cannot be a satisfactory explanation. 
Indeed, one notices that the peaks are shifted with respect to each other, so that the value of one of the functions at the maximum of the other one is quite far from~1, much farther than its maximal value. Hence, additional arguments are required for an explanation here.

In fact, note that altering \emph{both} junction widths, that is, considering the symmetric potential with $l_1=l_3=30$~nm, $l_2=100$~nm, one gets a much better approximation for the transmission probability: see Fig.~\ref{fig:sym-vs-asym}, red and blue lines. Though, \emph{a priori} it is not clear why such an approximation (contrary to the one-side modification) gives so precisely the approximately magic angles. We will explain it at the end of this subsection.

To provide a complete explanation for the almost-magic angles effect, we will approximate the problem of crossing of an n-p-n barrier as a sequence of two independent crossings, of an n-p and of a p-n barriers respectively. The error in such an approximation will be almost neglectable, thus reducing our question to the study of individual crossings. Finally, an additional effect, appearing in the bilayer graphene (contrary to the single-layer) is in the core of the explanation here.

\begin{figure}[hbt]
    \centering
\includegraphics[width=8cm]{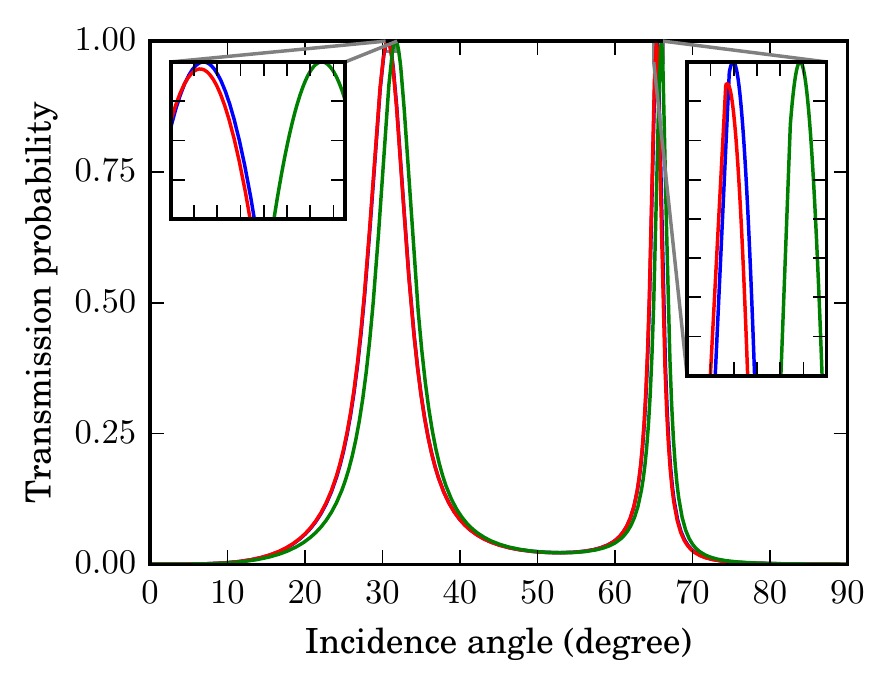}
    \caption{
        {
        Transmission probability for n-p-n junction in bilayer graphene for asymmetric 20-100-40~nm (red \paperj{solid }lines), symmetric 20-110-20~nm (green lines) and 30-100-30~nm (blue \paperj{dotted }lines) barriers.}
    }\label{fig:sym-vs-asym}
\end{figure}

Namely, under the width $l_2$ flat part of the barrier, the equation~\eqref{eq:D-x-4x4} again has constant coefficients, and can be interpreted as zero-potential equation for a wave of energy $\En:=E-U_0$. Provided that $|\En|>|a|=|E \sin \theta|$, the wave-type solutions of~\eqref{eq:D-x-4x4} in this domain can be written in the form
$$\phiR(x)=e^{i\kn x} v_R', \quad \phiL(x)=e^{-i\kn x} v_L',$$
where $\kn =\sqrt{\En^2 - a^2}$. Here $v_R'$ and $v_L'$ are the corresponding eigenvectors of the matrix of the system, given by~\eqref{eq:vRL} for the energy~$\En$, up to some sign changes due to the inequality $\En<0$:  
\begin{equation}\label{eq:vRL-prim}
v_R'=\left(
\begin{matrix}
a+ i\kn \\ \frac{2m}{{\hbar}^2} \En \\
a- i\kn \\ -\frac{2m}{{\hbar}^2} \En
\end{matrix}
\right)
\quad \text{ and } v_L'=\left(
\begin{matrix}
a- i\kn \\ \frac{2m}{{\hbar}^2} \En  \\
a+ i\kn \\ -\frac{2m}{{\hbar}^2} \En
\end{matrix}
\right)
\end{equation}
Also, as earlier, one also finds in this domain solutions of the form
$$\phi_+(x)=e^{\lmdn x} v_+', \quad \phi_-(x)=e^{-\lmdn x} v_-',$$
where $\lmdn=\sqrt{\En^2 + a^2}$, and $v'_{\pm}$ are the associated eigenvectors (again with a sign change with respect to~\eqref{eq:vpm}):
\begin{equation}\label{eq:vpm-prim}
v_+=\left(
\begin{matrix}
a+ \lmdn \\ -\frac{2m}{{\hbar}^2}\En \\
-(a- \lmdn) \\ -\frac{2m}{{\hbar}^2}\En
\end{matrix}
\right), \quad v_-=\left(
\begin{matrix}
a- \lmdn \\ -\frac{2m}{{\hbar}^2}\En \\
-(a+ \lmdn) \\ -\frac{2m}{{\hbar}^2}\En
\end{matrix}
\right).
\end{equation}

Now, related to the problem of describing an n-p-n barrier, one can consider the problem of describing an n-p transmission, given by a potential $U_{n-p}(x)$ that is identically~$0$ on the left of the barrier and identically $U_0$ on the right of it. For a barrier of characteristic width~$l$, analogously to the n-p-n barrier, we can take 
\begin{equation}\label{eq:n-p-potential}
U(x)=\left\{\begin{array}{l}
0, \quad \,\,\,\text{if $x<0$},\\
U_0, \quad \text{if $x>l$},\\
U_0 \cdot \frac{1}{2} \left( 1+ \tanh (10 (x-\frac{l_1}{2})/l) \right),
\\ \qquad \,\,\,\text{if $x\in[0,l]$},
\end{array}
\right.
\end{equation}
(compare with \cite[Eq. (135)]{TRK12}). A physically meaningful solution then should be of the form 
\begin{equation}\label{eq:sol-n-p}
\Psi_{n-p}(x)= \begin{cases}
        a_1\psiR(x)+a_2 \psiL(x)+a_3 \psi_+(x),& \\
         \qquad \text{ to the left of the barrier}, & \\
        a_4\phiR(x)+a_5 \phiL(x)+a_6 \phi_-(x), & \\
         \qquad \text{ to the right of the barrier}. &
    \end{cases}
\end{equation}

The coefficients $a_1,a_2$ of the wave component to the left of the barrier and the coefficients $a_4, a_5$ to the right of it are then related by a transmission matrix $A_{n-p}=A_{n-p}(l,\theta)$:
$$
\left( \begin{array}{c}
a_1 \\
a_2
\end{array} \right)= A_{n-p} \left( \begin{array}{c}
a_4 \\
a_5
\end{array} \right).
$$
Note now, that for any coordinates on the the 2-dimensional space of physical solutions, the restriction of the bilinear antisymmetric form $Q$ on this plane is proportional to the determinant (area) form in these coordinates. Hence, considering two different systems of coordinates $(a_1,a_2)$ and $(a_4,a_5)$, we see that the coefficient of proportionality is equal to $Q(\psiR,\psiL)=4ik\cdot\frac{2m}{\hbar^2}E$ and $Q(\phiR,\phiL)=4i\kn\cdot\frac{2m}{\hbar^2}\En$ respectively, and thus the determinant of the matrix $A_{n-p}$, relating these coordinates, is equal to
$$
\det A_{n-p} = \frac{4i\kn\cdot\frac{2m}{\hbar^2}\En}{4ik\cdot\frac{2m}{\hbar^2}E}=\frac{\kn\En}{kE}.
$$

Now, pass from the solutions $\psiR, \psiL$ in $U=0$ domain and from $\phiR,\phiL$ in the $U=U_0$ one to the properly normalized ``sine-cosine'' solutions 
$$
\psi_{\cos}:=\frac{\sqrt{kE}}{2}(\psiR+\psiL), \quad \psi_{\sin}:=\frac{\sqrt{kE}}{2i}(\psiR-\psiL)
$$ 
in the domain $U=0$ and to 
$$
\phi_{\cos}:=\frac{\sqrt{-\kn\En}}{2}(\phiR+\phiL), \quad \phi_{\sin}:=-\frac{\sqrt{-\kn\En}}{2i}(\phiR-\phiL)
$$ 
in the domain $U=U_0$. The advantage of these solutions is that as they are purely real, the transmission matrix in these coordinates
\begin{equation}\label{eq:np-tilde}
\tA_{n-p} = \frac{\sqrt{kE}}{\sqrt{-\kn\En}} \left(\begin{smallmatrix} \frac{1}{2} & \frac{1}{2} \\ \frac{1}{2i} & -\frac{1}{2i} 
\end{smallmatrix} \right) A_{n-p}  \left(\begin{smallmatrix} \frac{1}{2} & \frac{1}{2} \\ -\frac{1}{2i} & \frac{1}{2i} 
\end{smallmatrix} \right)^{-1}
\end{equation}
is also purely real, and due to the choice of the normalization it is of determinant~$1$.

Recall now that a real area-preserving matrix $\widetilde{A}$ admits singular value decomposition: it can be represented as a product of a rotation matrix, a diagonal matrix $\left(\begin{smallmatrix} \mu & 0 \\ 0 & \mu^{-1} 
\end{smallmatrix} \right)$, and another rotation matrix. The reflection and transmission coefficients, associated to such a matrix then satisfy 
\begin{equation}\label{eq:t-r-mu}
|t|=\frac{2}{\mu+\mu^{-1}}, \quad |r|=\frac{\mu-\mu^{-1}}{\mu+\mu^{-1}}.
\end{equation}
In particular, for an angle to be (approximately) magic, the corresponding real matrix should be an (approximate) rotation. 

Write for a transmission problem through an n-p barrier of characteristic width~$l$ and for the angle of incidence $\theta$
\begin{equation}\label{eq:np-singular}
\tA_{n-p} = \tA_{n-p} (l,\theta) =R_{\alpha(l,\theta)}\left(\begin{smallmatrix} \mu(l,\theta) & 0 \\ 0 & \mu(l,\theta)^{-1} 
\end{smallmatrix} \right) R_{\beta(l,\theta)},
\end{equation}
where $R_{\alpha}$ stays for the rotation at angle~$\alpha$.

Next, consider the p-n transmission problem. To do so, note, that the application of $T$ sends it to the solution of the form
\begin{equation}
\Psi_{p-n}(x)= \begin{cases}
        \overline{a}_1\psiR(x)+\overline{a}_2 \psiL(x)+\overline{a}_3 \psi_-(x),& \\
         \qquad \text{ to the right of the barrier}, & \\
        \overline{a}_4\phiR(x)+\overline{a}_5 \phiL(x)+\overline{a}_6 \phi_+(x), & \\
         \qquad \text{ to the left of the barrier} &
    \end{cases}
\end{equation}
for the new potential $U_{p-n}(x)=U_{n-p}(-x)$ (note that this potential will be supported on $[-l,0]$). For the corresponding transmission matrix, we then have
\begin{equation}\label{eq:inv-conj}
A_{p-n}=\overline{A}_{n-p}^{\,-1}.
\end{equation}
In the same way as before, we pass to the sine-cosine bases, thus obtaining a real determinant 1 matrix
\begin{equation}\label{eq:pn-tilde}
\tA_{p-n} = \frac{\sqrt{-\kn\En}}{\sqrt{kE}} \left(\begin{smallmatrix} \frac{1}{2} & \frac{1}{2} \\ -\frac{1}{2i} & \frac{1}{2i} 
\end{smallmatrix} \right) A_{p-n}  \left(\begin{smallmatrix} \frac{1}{2} & \frac{1}{2} \\ \frac{1}{2i} & -\frac{1}{2i} 
\end{smallmatrix} \right)^{-1}
\end{equation}
Joining~\eqref{eq:np-tilde},~\eqref{eq:inv-conj} and~\eqref{eq:pn-tilde}, and using the reality of matrices $\widetilde{A}_{n-p}$, $\widetilde{A}_{p-n}$, we get 
\begin{equation}\label{eq:duality}
\widetilde{A}_{p-n}=
\left(\begin{smallmatrix} 1 & 0 \\ 0 & -1
\end{smallmatrix} \right)
\widetilde{A}_{n-p}^{-1}
\left(\begin{smallmatrix} 1 & 0 \\ 0 & -1
\end{smallmatrix} \right).
\end{equation}
The singular decomposition~\eqref{eq:np-singular} thus gives us
\begin{multline}\label{eq:pn-singular}
\tA_{p-n} = 
\left(\begin{smallmatrix} 1 & 0 \\ 0 & -1
\end{smallmatrix} \right)
 R_{-\beta(l,\theta)}
 \left(\begin{smallmatrix} \mu(l,\theta) & 0 \\ 0 & \mu(l,\theta)^{-1} 
\end{smallmatrix} \right)^{-1} \times \\ \times R_{-\alpha(l,\theta)} 
\left(\begin{smallmatrix} 1 & 0 \\ 0 & -1
\end{smallmatrix} \right) =  
 R_{\beta(l,\theta)}
 \left(\begin{smallmatrix} \mu(l,\theta)^{-1} & 0 \\ 0 & \mu(l,\theta)
\end{smallmatrix} \right) R_{\alpha(l,\theta)}.
\end{multline}

%
Finally, let us return back to an n-p-n junction. Note, that if the width~$l_2$ of the flat p-part of the barrier is sufficiently large, the transmission through the barrier can be approximated as a sequence of an n-p and p-n junctions (of widths~$l_1$ and~$l_3$ respectively). Indeed, the corresponding physical solutions are of the form
$$
\Psi_{n-p-n}(x)= \begin{cases}
        a_1\psiR(x)+a_2 \psiL(x)+a_{3} \psi_+(x), & \\
         \qquad \text{ to the left of the barrier}, &\\
        a_4\phiR(x)+a_5 \phiL(x)+a_6\phi_-(x)+a_7 \phi_+(x),& \\
         \qquad \text{ in the p-zone}, & \\
        a_8\psiR(x)+a_9 \psiL(x)+a_{10} \psi_-(x), & \\
         \qquad \text{ to the right of the barrier}. &
    \end{cases}
$$
The component $a_7\phi_+(x)$ is of order at most $1$ near the p-n part, and hence is of order $e^{-\lmdn l_2}$ near the n-p transition, which is neglectable (it is less than $10^{-7}$ for the barrier and angles described on Figs.~\ref{fig:bilayer},~\ref{fig:sym-vs-asym}).
The same applies to $a_6\phi_-(x)$. Thus, with very high accuracy one has
$$
\left( \begin{array}{c}
a_1 \\
a_2
\end{array} \right)\approx A_{n-p} \left( \begin{array}{c}
a_4 \\
a_5
\end{array} \right), \quad
\left( \begin{array}{c}
a_4 \\
a_5
\end{array} \right)\approx A_{p-n}^{(l')} \left( \begin{array}{c}
a_8 \\
a_9
\end{array} \right),
$$
where the transmission matrix $A_{p-n}^{(l')}$ corresponds to the p-n barrier, shifted by~$l'=l_1+l_2+l_3$ to the right from the position~$[-l_3,0]$ at which it was studied earlier. For the full transition matrix we thus have
$$
A_{n-p-n}\approx A_{n-p} A_{p-n}^{(l')}.
$$

Hence, the same holds for the transmission matrix in the sine-cosine basis:
\begin{equation}\label{eq:n-p-n}
\widetilde{A}_{n-p-n}\approx \widetilde{A}_{n-p} \widetilde{A}_{p-n}^{(l')}.
\end{equation}
Taking into account that 
$$
\widetilde{A}_{p-n}^{(l')}= R_{-\kn l'} \widetilde{A}_{p-n} R_{kl'}^{-1}
$$
and substituting into~\eqref{eq:n-p-n} the singular decompositions~\eqref{eq:np-singular},~\eqref{eq:pn-singular}, we obtain for the total n-p-n transmission matrix
\begin{multline}\label{eq:A-approx}
\tA_{n-p-n}(\theta)\approx  R_{\alpha_1(\theta)}\left(\begin{smallmatrix} \mu_1(\theta) & 0 \\ 0 & \mu_1^{-1}(\theta) \end{smallmatrix} \right) R_{\beta_1(\theta)} \times \\ \times R_{\beta_2(\theta)-\kn l'}\left(\begin{smallmatrix} \mu_2^{-1}(\theta) & 0 \\ 0 & \mu_2(\theta) \end{smallmatrix} \right) R_{\alpha_2(\theta)-kl'},
\end{multline}
where  
\begin{align*}
\mu_1(\theta)=\mu(\theta,l_1),\quad \mu_2(\theta)=\mu(\theta,l_3), \\
\alpha_1(\theta)=\alpha(\theta,l_1),  \quad \alpha_2(\theta)=\alpha(\theta,l_3), \\
\beta_1(\theta)=\beta(\theta,l_1),  \quad \beta_2(\theta)=\beta(\theta,l_3).
\end{align*}
This matrix is natural to expect to be closest to a rotation for angles $\theta$ when the composition of rotations in the middle, $R_{\beta_1(\theta)+\beta_2(\theta)-\kn l'}$, is a rotation by an integer multiple of $\pi$ (compare with~\eqref{eq:p-trace} and~\eqref{eq:trAA} for in Sec.~\ref{s:peaks} below).
For such angle $\theta_0$, one has
\begin{equation*}
\tA_{n-p-n}(\theta)\approx \pm R_{\alpha_1(\theta)}\left(\begin{smallmatrix} \mu_1(\theta)\mu_2^{-1}(\theta) & 0 \\ 0 & \mu_1^{-1}(\theta)\mu_2(\theta)  \end{smallmatrix} \right) R_{\alpha_2(\theta)-kl'}.
\end{equation*}


Now, a final (and key) remark is that for the bilayer graphene and relatively short n-p barriers, the value $\mu(\theta,l)$ does depend on the angle of incidence, but depends very slightly on the width of the barrier, while the latter stays sufficiently small: see Fig.~\ref{fig:mu-theta}. 
A plausible explanation for this will be discussed in Sec.~\ref{s:algorithm}. 

Due to the this observation, the above matrix is indeed very close to the rotation one, as $\mu(\theta,l_1)\approx \mu(\theta,l_3)$. More precisely, for such $\theta_0$, substituting $\mu=\mu_1(\theta_0)/\mu_2(\theta_0)$ into~\eqref{eq:t-r-mu}, one has
$$
|r(\theta_0)| = \frac{|\mu-\mu^{-1}|}{\mu+\mu^{-1}}=\tanh \delta(\theta_0),
$$
where $\delta(\theta)=|\log \mu_1(\theta)-\log \mu_2(\theta)|$.
This scenario indeed holds for both approximate magic angles illustrated on Fig.~\ref{fig:bilayer}, right. For instance, for an approximate magic angle $\theta_1\approx 30.65^{\circ}$ we have
\begin{align*}
\log \mu_1(\theta_1)\approx 1.1641, \quad \log \mu_2(\theta_1)\approx 1.1343, \\
\delta(\theta_1)= |\log \mu_1(\theta_1)-\log \mu_2(\theta_1) |\approx 0.0298, 
\end{align*}
what is in perfect agreement with $|r(\theta_1)| \approx 0.0299$. For $\theta_2\approx 65.34^{\circ}$ one has 
\begin{align*}
\log \mu_1(\theta_2)\approx 1.5815, \quad \log \mu_2(\theta_2)\approx 1.6339, \\
\delta(\theta_2)= |\log \mu_1(\theta_2)-\log \mu_2(\theta_2) |\approx 0.0524, 
\end{align*}
what is also in perfect agreement with $|r(\theta_2)|\approx 0.0524$.

Finally, the above description also explains why the 30-100-30~nm potential was such a good approximation for the 20-100-40~nm one (see Fig.~\ref{fig:sym-vs-asym}). Indeed, the equation for the magic angle is based on the angles $\beta(\theta,l_1), \beta(\theta,l_3)$, and we have with quite high precision
$$
\beta(\theta,l_1)+\beta(\theta,l_3)\approx 2 \beta(\theta,\frac{l_1+l_3}{2}).
$$

To conclude, we note that the mechanism described in this section is exactly the one that one would like to avoid while looking for a transmission-blocking potential. Hence, it seems quite natural to avoid long flat parts (as they are likely to ``cancel'', for some angles, what happens before and after them), and hence consider a fast-oscillating potential~--- as it was done in Sec.~\ref{s:no-transmission}.

\begin{figure}
\subfloat[]{\includegraphics[width=7cm]{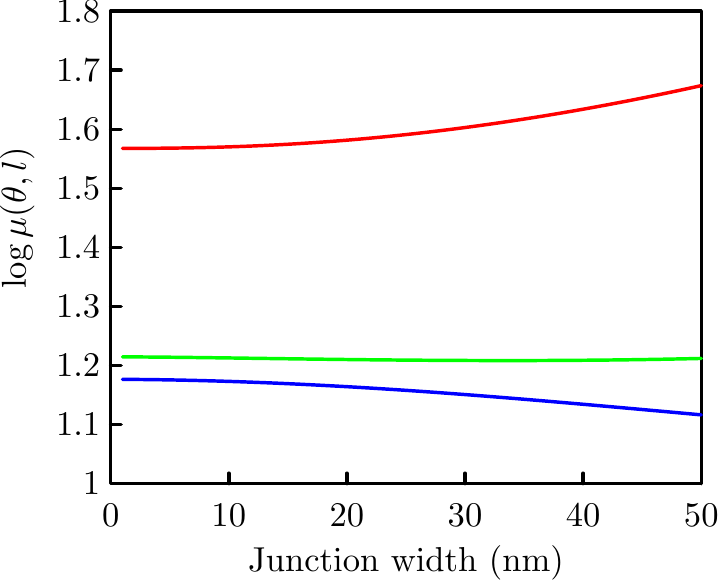}}\\ 
\subfloat[]{\includegraphics[width=7cm]{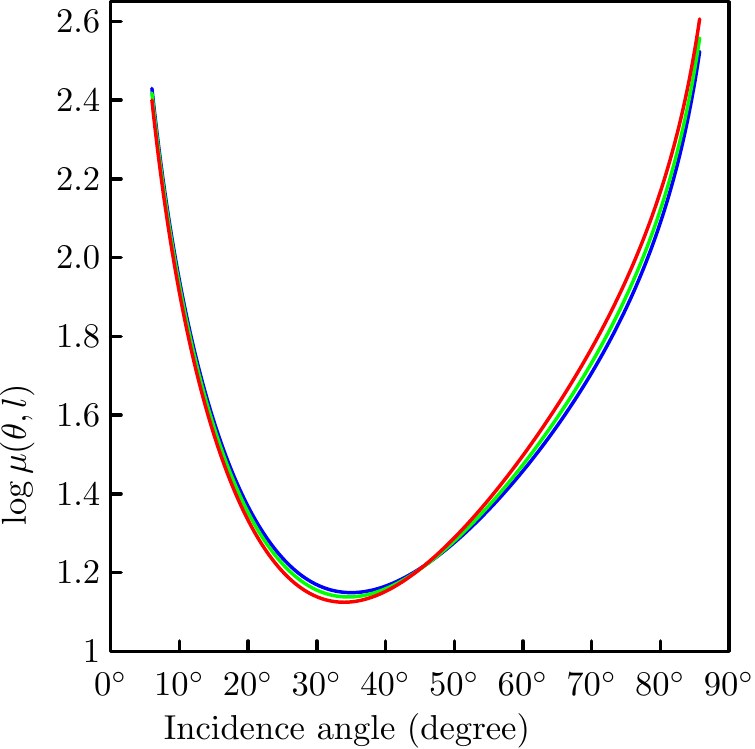}}
\caption{The function $\log \mu(\theta,l)$ for a wave of energy $E=17$~meV, crossing on bilayer graphene an n-p barrier of height $V=50$~meV (same energy and height as for Fig.~\ref{fig:bilayer}).
(a) the dependence on the width $l$ for (almost) magic angles $\theta\approx 30.64^{\circ}$ (blue line), $\theta\approx 65.34^{\circ}$~(red) and for $\theta=45^{\circ}$ (green). (b) the dependence on the angle of incidence~$\theta$, for widths~$l=20$~nm (blue line), $30$~nm (green), $40$~nm (red).
}\label{fig:mu-theta}
\end{figure}

\subsection{Peaks for n-p-n barriers}\label{s:peaks}

The description above can be used to describe the shape of a peak of the transmission probability. Namely, for a matrix $A\in SL(2,\mathbb{R})$, the corresponding inertia coefficients $\mu$ and $\mu^{-1}$ can be found from the relation $\tr AA^* = \mu^2+ \mu^{-2}$, thus implying that 
\begin{equation}\label{eq:p-trace}
p(\theta)=|t(\theta)|^2= \frac{4}{2+\tr AA^*},
\end{equation}
where $A=\tA_{n-p-n}$ is the corresponding transmission matrix. From the approximation~\eqref{eq:A-approx}, one gets 
$$
\tr AA^*=\tr \left(\begin{smallmatrix} \mu_1^2(\theta) & 0 \\ 0 & \mu_1^{-2}(\theta) \end{smallmatrix} \right) R_{\gamma(\theta)}  \left(\begin{smallmatrix} \mu_2^2(\theta) & 0 \\ 0 & \mu_2^{-2}(\theta) \end{smallmatrix} \right) R_{-\gamma(\theta)},
$$
where $\gamma(\theta)=\beta_1(\theta)+\beta_2(\theta)-\kn l'$. An explicit calculation then gives 
\begin{multline}\label{eq:trAA}
\tr AA^* = \left(\mu_1^2(\theta)\mu_2^2(\theta)+\mu_1^{-2}(\theta)\mu_2^{-2}(\theta)\right) \cos^2 \gamma(\theta)+ \\
+\left(\mu_1^2(\theta)\mu_2^{-2}(\theta)+\mu_1^{-2}(\theta)\mu_2^2(\theta)\right) \sin^2 \gamma(\theta)=\\
= c_1(\theta) + c_2(\theta) \cos^2 \gamma(\theta),
\end{multline}
where 
\begin{align*}
c_1(\theta)=\mu_1^2(\theta)\mu_2^{-2}(\theta)+\mu_1^{-2}(\theta)\mu_2^2(\theta), 
\\
c_2(\theta)=(\mu_1^2(\theta)-\mu_1^{-2}(\theta))(\mu_2^2(\theta)-\mu_2^{-2}(\theta)).
\end{align*}

A peak of the transmission probability corresponds to a local minimum of the denominator of~\eqref{eq:p-trace}. Due to the wide $l_2$-part, the angle 
$$
\gamma(\theta)= \beta_1(\theta)+\beta_2(\theta)-\kn l' 
$$ 
changes much faster, than $\mu_{1,2}(\theta)$ do, so such a peak~$\theta_0$ (almost) corresponds to $\gamma(\theta_0)\approx \pi n$. Approximating $\cos^2(\gamma(\theta))\approx c (\theta-\theta_0)^2$, where $c= (\frac{d \gamma (\theta)}{d\theta})^2$, one gets for the shape of the peak the Lorenz--Breit--Wigner form:
$$
p(\theta)=\frac{4}{2+\tr AA^*}\approx \frac{p(\theta_0)}{1+c' (\theta-\theta_0)^2}.
$$
Such an approximation turns out to be quite precise: see Fig.~\ref{fig:quad} for such an approximation for the peaks corresponding to the n-p-n barrier discussed in~\cite{TRK12} (as well as in Sec.~\ref{s:intro-b}).

\begin{figure}[hbt]
    \centering
    \includegraphics[width=8.5cm]{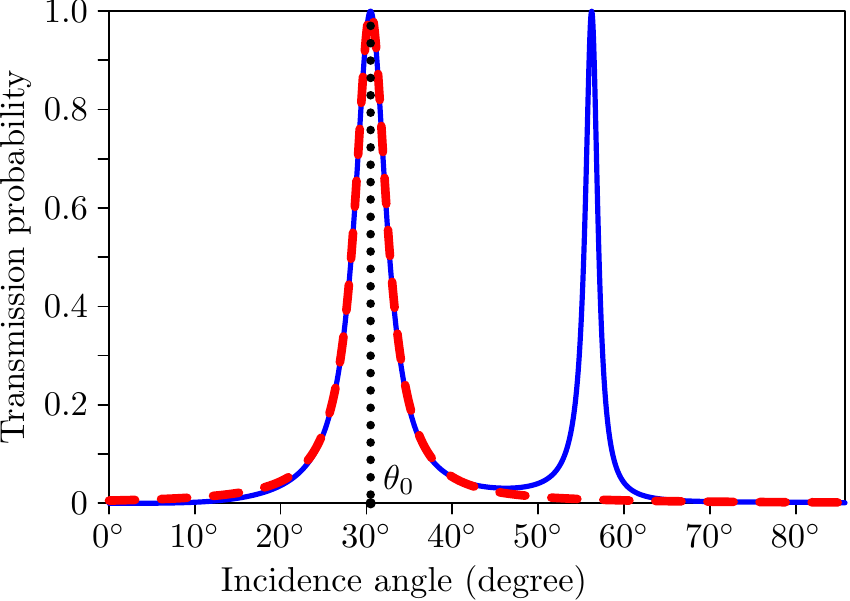}
    \caption{
        {Blue (solid) line: exact transmission probability for the symmetric n-p-n junction. Red (dashed bold) line: Lorenz--Breit--Wigner approximation $p(\theta)\approx \frac{1}{1+c(\theta-\theta_0)^2}$.}
    }\label{fig:quad}
\end{figure}


\subsection{Computational algorithms}\label{s:algorithm}

We conclude this paragraph with specifying a computation method for finding the transmission probabilities for long barriers. Namely, a straightforward method of computation includes numerically solving the differential equation~\eqref{eq:D-x-4x4} through the barrier. Then, finding a linear combination $\sB$ of solutions that starts on the right of the barrier with $\psiR$ and with $\psi_{-}$ that would have no exponentially growing component on the left of the barrier. Though, if the barrier is sufficiently long, the solution starting with a given initial value has a tendency to grow exponentially with the width of the barrier.

Instead, consider the solution $\sM(x)$ to~\eqref{eq:D-x-4x4} that coincides with $\psi_-$ on the right of the barrier. Note, that we do not need to know the solution itself, but only up to the proportionality: its only role will be to be added to a linear combination to remove the exponential growth on the left of the barrier. Hence, instead of finding $\sM(x)$ itself (which is most natural to expect to grow exponentially when one passes to the left of the barrier), we can look for a unit vector-valued function $\Psi_-^0(x)$ that is proportional to it at any point $x$, given by the normalization
\begin{equation}
\Psi_-^0(x)=\sM(x)/|\sM(x)|.
\end{equation}
The latter can be found either by normalizing the solution on each step, or by solving a differential equation for it,
\begin{equation}\label{eq:Psi0}
\partial_x \Psi_-^0 = A\Psi_-^0 - (A \Psi_-^0,\Psi_-^0) \Psi_-^0,
\end{equation}
where $A$ be the matrix of~\eqref{eq:D-x-4x4} and  $(\cdot,\cdot)$ is the usual scalar product.


Second, to find the reflection and transmission coefficients, we are looking for a solution $\sB$ to~\eqref{eq:D-x-4x4} that is of the form~\eqref{eq:t-and-r-bilayer}. Again, it is natural to expect that the solution, starting with $\psi_R$ on the right of the barrier, will grow exponentially. 

But the solution $\sB$ we are looking for is anyway a combination of the solution starting with $\psiR$ and the one starting with~$\psi_{-}$. So while solving the equation~\eqref{eq:D-x-4x4} right-to-left, starting with $\psiR$ outside the barrier, we can at any point $x$ safely add $\Psi_+^0(x)$ with any coefficient: this keeps us in the same space of linear combinations. 
Thus, we consider the component of $\psiR$ that is orthogonal to $\Psi_-^0$, that is, 
\begin{equation}\label{eq:p-projection}
\wpsi(x):=\psi_R(x)-(\psi_R(x),\Psi_-^0(x)) \Psi_-^0(x).
\end{equation}

It obeys the following modification of the equation~\eqref{eq:D-x-4x4}:
$$
\partial_x \wpsi = A\wpsi - \frac{1}{2}((A+A^*)\wpsi,\Psi_-^0) \Psi_-^0.
$$
Such an orthogonalization removes the growth associated to the highest (non-physical) eigenvalue of $A$, leaving only the expansion and contraction associated to the physical solutions themselves.

Finally, after arriving to the point $(-x_0)$ on left side of the barrier, one further modifies the solution $\widetilde{\psi}$ on the left of the barrier by adding again $\Psi_-^0$ with such a coefficient
$$
\beta=-\frac{Q(\widetilde{\psi}(-x_0),v_+)}{Q(\Psi_-^0(-x_0), v_+)}
$$
that the obtained combination $\psi:=\widetilde{\psi}+\beta \Psi_-^0$ does not have a component along $\psi_-$. Decomposing $\psi(-x_0)$ in the base of eigenvectors $v_R, v_L, v_+, v_-$, we find $\frac{1}{t(\theta)}$ and $\frac{r(\theta)}{t(\theta)}$ as coefficients before $v_R$ and $v_L$ respectively.

The above algorithm (or its slight modification) seem also to be an explanation of the almost-constancy of the inertia coefficients $\mu(\theta)$, that are (as it was discussed in Sec.~\ref{s:approximate}) in the background of the appearance of almost-magic angles in the bilayer case.

Note first, that we can construct the space of physically interesting solutions of~\eqref{eq:D-x-4x4} in the following way. Any bounded (``physical'') solution $\widetilde{\Psi}$ of~\eqref{eq:D-x-4x4} is $Q$-orthogonal to both solutions $\sP$ and $\sM$ that coincide respectively with $\psi_+$ to the left of the barrier and with $\psi_-$ to the right of it. Indeed, $Q$-products $Q(\widetilde{\Psi},\sP)$ and $Q(\widetilde{\Psi},\sM)$ give (after normalization by $Q(\psi_-,\psi_+)$) respectively the coefficients before $\psi_-$ in the decomposition to the left of the barrier and before $\psi_+$ in the decomposition to the right of it. On the other hand, for a generic $\theta, E, U$ the solutions $\sP$ and $\sM$ are linearly independent, and hence a $Q$-orthogonal complement to the 2-plane $\{\alpha \sP + \beta \sM\}$ is exactly the 2-plane of physical solutions.

Now, instead of considering functions $\sM$ and $\sP$, we can consider their normalized versions
\begin{equation*}
\Psi_+^0(x)=\sP(x)/|\sP(x)|, \quad \Psi_-^0(x)=\sM(x)/|\sM(x)|.
\end{equation*}
As before, they can be constructed by solving a differential equation left-to-right and right-to-left respectively. Note that it is natural to expect such a construction to be stable under small perturbations: a long-time map contracts almost all the directions in a small neighborhood of the most expanded image.

Once these two functions are found, at any point $x$ we know the 2-plane of physical solutions as the $Q$-orthogonal complement to the 2-plane $\{\alpha \Psi_+^0(x) + \beta \Psi_-^0(x)\}$. 

Finally, knowing such plane, generically, we can reconstruct the flow without using the potential~$U$ explicitly. Indeed, the derivatives of the first and of the third coordinate in~\eqref{eq:D-x-4x4} do not include~$\tilde{U}$, and at the same time, the first and the third coordinate generically provide a system of coordinates on the 2-plane of physical solutions. Thus, the system of differential equations on the first and third coordinate can be written only using~$\Psi_{\pm}^0$.

The same technique applies also to the study of n-p transitions, with the only difference that on the right of the barrier we are taking the function $\Phi_-^0$ instead of $\Psi_-^0$, that is obtained as $\Phi_-^0=\widetilde{\Phi}_-/|\widetilde{\Phi}_-|$, where $\widetilde{\Phi}_-$ coincides with $\phi_-$ on the right of the barrier.

At the same time, it is natural to expect that $\Psi_+^0$ and $\Phi_-^0$ will not change too much under small changes of the potential $U$ defining the n-p transition. The reason for that is that these functions are solutions to 1st order differential equations, defined by $U$ (so even a discontinuity of $U$ will result in a smooth behavior of $\Psi_+^0$ and~$\Phi_-^0$). At the same time, in a zone where two different potentials $U$ coincide, these functions (roughly speaking, corresponding to the fastest-growing direction) quickly become aligned, so the effect of any perturbation should be sufficiently local.

This shows the difference between the single- and bilayer graphene. For a single-layer graphene, we have a two-dimensional differential equation directly involving the potential $U$. While for the bilayer graphene, we have an additional ``integration'': the solutions $\Psi_+^0$, associated to these potentials, are very close to each other  (and the same holds for the solutions $\Phi_-^0$); they mostly coincide, and differ only a little between the two potentials. And as it is these functions that define a new 2-dimensional differential equation, the difference between the transmission matrices $A_{n-p}$ comes from the ``integration'' of difference between them, and should be even smaller.

\section{Conclusions}
To summarize, we have considered chiral tunneling through one-dimensional potential barriers for the cases of both single-layer and bilayer graphene. We have proven that in both cases for symmetric barriers magic angles with 100\% transmission can be found from one {\it real} equation. For the case of bilayer, this equation does not necessarily have real solutions and we have presented examples of the potential barrier with a very low transmission at any angles in a restricted energy range. This opens a way to build the conventional p-n-p (or n-p-n) transistor from bilayer graphene {\it without opening the energy gap}. This is also important conceptually since it gives a clear counterexample to an opinion that the Klein tunneling in single-layer graphene is due to gapless character of the spectrum; actually, it is due to a special chiral character of electronic states there. We have also presented arguments explaining why for the case of bilayer graphene the difference between symmetric and antisymmetric barriers as not as dramatic as for the single-layer one: whereas for the latter case asymmetric shape of the barrier results in an exponential suppression of transmission at non-zero magic angles, in bilayer graphene they are robust, in a sense that the transmission probability remains close to 100\%. 

\section*{Acknowledgements}

MIK acknowledges funding from the European Union Seventh Framework Programme under grant agreement No.~604391 Graphene Flagship and from ERC Advanced Grant No.~338957 FEMTO/NANO.

VK, AO and IS acknowledge the support of the Russian Foundation for Basic Research project \mbox{13-01-00969-a}.

The research of IS is also supported in part by the Dynasty Foundation.


\begin{thebibliography}{99}
\bibitem{Katsbook} M. I. Katsnelson, {\it Graphene: Carbon in Two Dimensions} (Cambridge University Press, Cambridge, 2012).
\bibitem{Cetal09} A. H. Castro Neto, F. Guinea, N. M. R. Peres, K. S. Novoselov, and A. K. Geim, Rev. Mod. Phys. {\bf 81}, 109 (2009).
\bibitem{KNG06} M. I. Katsnelson, K. S. Novoselov, and A. K. Geim, Nature Phys. {\bf 2}, 620 (2006).
\bibitem{SRL08} A. V. Shytov, M. S. Rudner, and L. S. Levitov, Phys. Rev. Lett. {\bf 101}, 156804 (2008).
\bibitem{SHG09} N. Stander, B. Huard, and D. Goldhaber-Gordon, Phys. Rev. Lett. {\bf 102}, 026807 (2009).
\bibitem{YK09} A. F. Young and P. Kim, Nature Phys. {\bf 5}, 222 (2009).
\bibitem{TRK12} T. Tudorovskiy, K. J. A. Reijnders, and M. I. Katsnelson, Phys. Scripta {\bf T146}, 014010 (2012).
\bibitem{Metal07} J. Martin, N. Akerman, G. Ulbricht, T. Lohmann, J. H. Smet, K. von Klitzing, and A. Yacoby, Nature Phys. {\bf 4}, 144 (2007).
\bibitem{Betal12}  L. Britnell, R. V. Gorbachev, R. Jalil, B. D. Belle, F. Schedin, A. Mishchenko, T. Georgiou, M. I. Katsnelson, L. Eaves, S. V Morozov, N. M. R. Peres, J. Leist, A. K. Geim, K. S. Novoselov, and L. A. Ponomarenko, Science {\bf 335}, 947 (2012).
\bibitem{ANS98} T. Ando, T. Nakanishi, and R. Saito, J. Phys. Soc. Japan {\bf 67}, 2857 (1998).
\bibitem{RTK13} K. J. A. Reijnders, T. Tudorovskiy, and M. I. Katsnelson, Ann. Phys. (N.Y.) {\bf 333}, 155 (2013).
\bibitem{Netal06} K. S. Novoselov, E. McCann, S. V. Morozov, V. I. Falko, M. I. Katsnelson, U. Zeitler, D. Jiang, F. Schedin, and A. K. Geim,
Nature Phys. {\bf 2}, 177 (2006).
\bibitem{MF06} E. McCann and V. I. Fal'ko, Phys. Rev. Lett. {\bf 96}, 086805 (2006).
\bibitem{GRL11} N. Gu, M. Rudner, and L. Levitov, Phys. Rev. Lett. {\bf 107}, 156603 (2011).
\bibitem{DP13} B. Van Duppen and F. M. Peeters, Phys. Rev. B {\bf 87}, 205427 (2013).
\bibitem{Metal11} A. S. Mayorov, D. C. Elias, M. Mucha-Kruczynski, R. V. Gorbachev, T. Tudorovskiy, A. Zhukov, S. V. Morozov, M. I. Katsnelson, V. I. Fal'ko, A. K. Geim, and K. S. Novoselov, Science {\bf 333}, 860 (2011).
\bibitem{FGKS} D. Filimonov, A. Glutsyuk, V. Kleptsyn, and I. Schurov, Funct. Anal. Appl. {\bf 48}, 272 (2014).
\bibitem{KR14} A. Klimenko and O. Romaskevich, Moscow Math. J. {\bf 14}, 367 (2014).

\end{thebibliography}
\end{document}